\let\old@ps@headings\ps@headings
\let\old@ps@IEEEtitlepagestyle\ps@IEEEtitlepagestyle
\def\psccfooter#1{%
    \def\ps@headings{%
        \old@ps@headings%
        \def\@oddfoot{\strut\hfill#1\hfill\strut}%
        \def\@evenfoot{\strut\hfill#1\hfill\strut}%
    }%
    \def\ps@IEEEtitlepagestyle{%
        \old@ps@IEEEtitlepagestyle%
        \def\@oddfoot{\strut\hfill#1\hfill\strut}%
        \def\@evenfoot{\strut\hfill#1\hfill\strut}%
    }%
    \ps@headings%
}
\newtheorem{rem}{Remark}
\newtheorem{thm}{Theorem}
\newtheorem{ass}{Assumption}
\newtheorem{lemma}{Lemma}
\newenvironment{proof}{\vspace{0cm}\paragraph*{Proof}}{\vspace{0cm}\hfill$\blacksquare$}
\colorlet{shadecolor}{white}
\newcommand{\normsz}[1]{\lVert #1 \rVert_2}
\newcommand{\expect}[1]{\mathbb{E}\big[ #1 \big]}
\renewcommand{\r}{\textcolor{black}}
\begin{document}
%
\title{Closing the Loop: Dynamic State Estimation and Feedback Optimization of Power Grids}

\author{
\IEEEauthorblockN{Miguel Picallo, Saverio Bolognani, Florian D{\"o}rfler}
\IEEEauthorblockA{Automatic Control Laboratory, ETH Zurich, 8092 Zurich, Switzerland \\ \{miguelp,bsaverio,dorfler\}@ethz.ch
}
}



\maketitle

\begin{abstract}
This paper considers the problem of online feedback optimization to solve the AC Optimal Power Flow in real-time in power grids. This consists in continuously driving the controllable power injections and loads towards the optimal set-points in time-varying conditions based on real-time measurements performed on the grid. However, instead of assuming noise-free full state measurement like in recently proposed feedback optimization schemes, we connect a dynamic State Estimation using available measurements, and study its dynamic interaction with the optimization scheme. We certify stability of this interconnection and the convergence in expectation of the state estimate and the control inputs towards the true state values and optimal set-points respectively. Additionally, we bound the resulting stochastic error. Finally, we show the effectiveness of the approach on a test case using high resolution consumption data.
\end{abstract}

\begin{IEEEkeywords}
Distribution grid state estimation, AC optimal power flow, online feedback optimization, voltage regulation
\end{IEEEkeywords}

\thanksto{Funding by the Swiss Federal Office of Energy through the projects ''Renewable Management and Real-Time Control Platform (ReMaP)'' (SI/501810-01) and ''A Unified Control Framework for Real-Time Power System Operation (UNICORN)'' (SI/501708) and by the ETH Foundation is gratefully acknowledged.}

\section{Introduction}

The operation of power grids, and especially distribution grids, is undergoing a paradigm shift due to the increasing share of controllable elements (generation power, curtailment, reactive power in converters, flexible loads, etc.), and pervasive sensing (smart meters, phasor measurement units, etc.). Moreover, the introduction of communication networks with high resolution sensor sampling, and the fast response offered by the power-electronics interfacing the controllable energy sources, enable to perform very fast control-loop rates. These new technologies offer the potential advantage of lowering the grid operation cost, promoting more sustainable energy systems, improving reliability and enabling a more efficient use of the existing infrastructure. Nonetheless, the unpredictability and the high variability of household loads and renewable energy sources, pose a severe challenge in satisfying the required grid specifications, like voltage levels, thermal limits, etc. Consequently, these specifications need to be enforced through either offline optimization or real-time feedback control.

State-of-the-art optimization methods typically solve a static Optimal Power Flow (OPF) to determine the set-points of these controllable energy sources \cite{molzahn2017survey}. However, these approaches may not succeed in efficiently controlling the system under fast time-varying conditions, since they do not take full advantage of these fast sensor sampling and control-loop rates. Additionally, they rely on accurate grid models and measurements which are seldom available. On the other hand, the recently proposed online feedback optimization \cite{bolognani2015reactivePF, anese2016optimal, tangrtopf2017, hauswirth2017onlinePF, colombino2019robustness} has shown an outstanding performance for real-time power system operation under variable conditions and safety requirements. This online feedback optimization consist of collecting real-time grid state measurements at each time step, and then use them as feedback to a controller that incrementally drives the controllable power injections towards the optimal set-points. It has been been shown, that this feedback optimization offers the advantages of quickly adapting to time-varying conditions \cite{anese2016optimal} and an improved robustness against model-mismatch \cite{colombino2019robustness}.

However, so far the analysis of all of these feedback optimization approaches have considered a stylized problem setup by assuming availability of noise-free measurements of all states that need to be controlled, potentially the entire system state. Yet, state measurements may be scarce or present disturbance noises. Such is the case in distribution grids, which typically present heterogeneous measurements and are not observable, and thus require inaccurate a-priori information in the form of load predictions, also known as pseudomeasurements, to achieve observability \cite{schenato2014bayesian, picallo2017twostepSE}. Such a setup calls for a State Estimation (SE) \cite{abur2004power}, which in fact has been considered for the standard offline OPF problem \cite{picallo2018stochOPF}, but not yet for the online feedback optimization. There is one main obstacle on the way: even if the SE and the feedback optimization are both separately stable and optimal, this does not guarantee that their interconnection will inherit these properties. 

Therefore, in this paper, we combine a dynamic SE \r{\cite{zhao2019revsdynse}}, based on a Kalman filter, and the online feedback optimization. The contribution of this work lies in formally proving that the interconnection of the grid dynamics, the SE, and the feedback optimization is stable and steady-state optimal. More concretely, we certify that in the presence of process and measurement noise, the state estimate and the power set-points delivered by our method converge in expectation to the true state and the optimal set-points, respectively, and both have a bounded error covariance. Additionally, we show the effectiveness of our approach in the IEEE 123-bus test feeder \cite{kersting1991radial} with highly uncertain pseudo-measurements and high resolution consumption data.

The rest of the paper is structured as follows: Section~\ref{sec:prelim} introduces some preliminaries: the distribution grid model, the problem setup and its linearization, the online feedback optimization, and the measurements. Section~\ref{sec:SEoptim} presents the proposed method combining SE and feedback optimization, and proves its stability and convergence. Finally, in Section~\ref{sec:sim} the approach is validated on a simulated test feeder. The proof of the main result is in Appendix \ref{app:thmproof}. 

\section{Preliminaries}\label{sec:prelim}

\subsection{Distribution Grid Model}\label{sec:grid}
A distribution grid can be modelled as a graph $\mathcal{G}=(\mathcal{V},\mathcal{E},\mathcal{W})$ with nodes $\mathcal{V}=\{1,...,N_\text{bus}\}$ representing the buses, edges $\mathcal{E}=\{(v_i,v_j)  \mid \hspace{-0.05cm} v_i,v_j \in \mathcal{V}\}$ representing the branches, and edge weights $\mathcal{W}=\{w_{i,j} \hspace{-0.05cm} \mid \hspace{-0.05cm} (v_i,v_j) \hspace{-0.05cm} \in \hspace{-0.05cm} \mathcal{E}, w_{i,j} \hspace{-0.05cm} \in \hspace{-0.05cm} \mathbb{C}\}$ representing the admittance of a branch.

In 3-phase networks buses may have up to 3 phases, so that the voltage at bus $i$, with $n_{\phi,i}\leq 3$ phases, is $U_{\text{bus},i} \in \mathbb{C}^{n_{\phi,i}}$ (and the edge weights $w_{i,j}\in \mathbb{C}^{n_{\phi,i} \times n_{\phi,j}}$). The state of the network is then represented by the vector bus voltages $U_\text{bus}=[U_\text{pcc}^T, \; U^T]^T \in \mathbb{C}^{N+3}$, where $U_{\text{pcc}} \in \mathbb{C}^3$ denotes the measured voltage at the point of common coupling (PCC) connected to the main grid, and $U \in \mathbb{C}^N$ are the voltages in the non-source buses, where $N$ depends on the number of buses and phases per bus. The Laplacian matrix $Y \in \mathbb{C}^{(N+3) \times (N+3)}$ of the weighted graph $\mathcal{G}$, also called admittance matrix, can be used to express the power flow equations that relate the currents $I \in \mathbb{C}^N$ and the active and reactive power injections $P,Q \in \mathbb{R}^N$ at each node:
\begin{equation}\label{eq:PFeq}\arraycolsep=1pt
\begin{array}{c}
\left[\begin{array}{c} I_{\text{pcc}} \\ I \end{array}\right] = 
Y\left[\begin{array}{c} U_{\text{pcc}} \\ U \end{array}\right] = \
\left[\begin{array}{cc} Y_{00} & Y_{01} \\ Y_{01}^T & Y_{11} \end{array}\right]
\left[\begin{array}{c} U_{\text{pcc}} \\ U \end{array}\right]\\[0.4cm]

P+jQ= \text{diag}(U)\bar{I} 
= \text{diag}(U)(
 \bar{Y}_{01}^T\bar{U}_{\text{pcc}} + \bar{Y}_{11}\bar{U}),
\end{array}
\end{equation}
where $j$ is the imaginary unit, $\bar{(\cdot)}$ denotes the complex conjugate, and $\text{diag}(\cdot)$ represents the diagonal operator, converting a vector into a diagonal matrix. 

Moreover, within the set of nodes $\mathcal{V}$ we distinguish the set of nodes with controllable power injections $\mathcal{C}$, and uncontrollable loads $\mathcal{L}$; and define the vectors $S_c , S_l $ with the corresponding power values. Thus, we can split $P,Q$ as follows:
\begin{equation*}\arraycolsep=1pt
\begin{array}{c}
\left[\begin{array}{c} P \\ Q \end{array}\right] = 
\mathbb{I}_c S_{c} + \mathbb{I}_l S_{l} , \hspace{0.1cm}
S_c \hspace{-0.05cm} = \hspace{-0.05cm} \left[\begin{array}{c} P_c \\ Q_c \end{array}\right] \hspace{-0.05cm} \in \mathbb{R}^{2N_c}, 
S_l \hspace{-0.05cm} = \hspace{-0.05cm} \left[\begin{array}{c} P_l \\ Q_l \end{array}\right] \hspace{-0.05cm} \in \mathbb{R}^{2N_l},
\end{array}
\end{equation*}
where $\mathbb{I}_c,\mathbb{I}_l$  are matrices filled with $0,1$ that link the elements of $S_c, S_l$ in the sets $\mathcal{C},\mathcal{L}$ to the corresponding nodes indices of $[P^T,Q^T]^T$, and $N_c,N_l$ are the number of elements in $\mathcal{C},\mathcal{L}$. 

\subsection{Problem Setup}

The operation of these distribution grids consists in optimizing the use of the controllable resources $S_c$, while satisfying some restrictions on the grid state $U$, which can be represented using polar coordinates as: $V=[\: \abs{U}^T,\angle{U}^T]^T \in \mathbb{R}^{2N}$. Then this optimization problem can be expressed as a particular case of the AC Optimal Power Flow (AC-OPF):
\begin{equation}\label{eq:opt0}
\min_{S_{c} \in \mathcal{F},V} f(S_{c}) + g(V) \text{ s.t. } \eqref{eq:PFeq},
\end{equation}
where 
\begin{itemize}[leftmargin = 1cm]
\item[$\mathcal{F}$:] The feasible set $\mathcal{F}$ contains the limits of available power, for example $\mathcal{F}=\{S_c | P_{i,\min} \leq P_{c,i} \leq P_{i,\max}, \; Q_{i,\min} \leq Q_{c,i} \leq Q_{i,\max}, \; \forall i=1,\dots,N_c\}$. These are hard constraints to be satisfied all the time.
\item[$f(S_{c})$:] The objective function $f(S_c)=\sum_{i=1}^{N_c} f_i(S_{c,i})$ determines the cost of the grid operations by adding the costs $f_i$ of using each controllable power $i$. 
\item[$g(V)$:] The objective function $g(V)$ encodes the grid restrictions through the specification of desired states and objectives. These can include power losses, voltage constraints, overload of lines and transformers, etc. \r{Similar to \cite{hauswirth2017onlinePF}}, we will use $g(V)=\frac{\rho}{2}\sum_{i=1}^N (\max(0,\abs{U}_i-V_{\max}))^2 + (\max(0,V_{\min}-\abs{U}_i))^2$ to penalize violations of the voltage limits. These are soft constraints that can be violated at a given cost, especially during transients. However, we can enforce fewer and lower violations if using large values for $\rho$. \r{Alternatively, these restrictions could be enforced as hard constraints \cite{anese2016optimal}. For that, we would consider the dual variables in the corresponding Lagrangian, and optimize over them alongside the primal variables.}
\end{itemize}

\subsection{Power Flow Linear Approximation}

The non-linearity of the power flow equations \eqref{eq:PFeq} is difficult to handle when solving the optimization problem \eqref{eq:opt0}. For multi-phase unbalanced distribution grids, these equations \eqref{eq:PFeq} are generally nonconvex \cite{low2014exactness}. Hence, \r{linear approximations are used frequently in the literature \cite{bolognani2016existence, anese2016optimal, molzahn2019surveyrel}}. Considering the flat voltage (zero injection profile) $U_0=Y_{11}^{-1} Y_{01}^T U_{\text{pcc}}$ ($V_0=[\: \abs{U_0}^T,\angle{U_0}^T]^T$) as operating point, and defining $Z_{V} = Y_{11}^{-1}\text{diag}(\bar{U_0})^{-1}$, we have the linearisation
\begin{equation}\label{eq:PFapprox}\arraycolsep=1pt\begin{array}{c}
V = V_0 + B_cS_c + B_l S_l \\[0.2cm] 
B_c=B\mathbb{I}_c, B_l=B\mathbb{I}_l \\[0.1cm]
B=
\left[\begin{array}{cc} \cos(\angle{U_0}) & \sin(\angle{U_0}) \\ \frac{-\sin(\angle{U_0})}{\abs{U_0}} & \frac{\cos(\angle{U_0})}{\abs{U_0}} \end{array}\right]
\left[\begin{array}{cc} \Re\{Z_{V}\} & \Im\{Z_{V}\} \\ \Im\{Z_{V}\} & -\Re\{Z_{V}\} \end{array}\right],
\end{array}
\end{equation}
where $\Re\{\cdot\},\Im\{\cdot\}$ denote the real and imaginary part of a complex number.

Instead of \eqref{eq:opt0}, we consider then its linear approximation:
\begin{equation}\label{eq:opt}
\min_{S_c \in \mathcal{F},V} f(S_c) + g(V) \text{ s.t. }  \eqref{eq:PFapprox}.
\end{equation}

\subsection{Online Feedback Optimization}

The static optimization problem \eqref{eq:opt0} is, in essence, a \textit{feedforward} controller that computes the optimum set-points given the estimated conditions and grid model. However, these set-points become suboptimal or outdated given fast time-varying conditions, model-mismatches and estimation errors \cite{doerfler2019autonomousgrids}. 

To mitigate these effects, the online feedback optimization consist of using the current grid state $V$ as \textit{feedback} to a controller that drives the controllable power towards the optimal set-points in real-time. Typically, these controllers use a variation of projected gradient descent \cite{hauswirth2017onlinePF,colombino2019robustness}. The gradient brings the set-points closer to the optimal ones, while the projection forces the solution to remain within the feasible set $\mathcal{F}$. At every time-step from $(t)$ to $(t+1)$ we update
\begin{equation}\label{eq:projgrad0}
S_{c,(t+1)} = \Pi_{\mathcal{F}} \big[S_{c,(t)} - \epsilon \big(\nabla_{S_c} f(S_{c(t)}) + B_c^T \nabla_V g(V_{(t)}) \big) \big],
\end{equation}
where $\epsilon$ is the descent rate of the gradient method, $\nabla_{S_c} f(S_c)$ and $\nabla_V g(V)$ are the gradients of $f$ and $g$ respectively, and $\Pi_{\mathcal{F}}[\cdot]$ denotes the projection on the feasible set $\mathcal{F}$ ($\Pi_{\mathcal{F}}[x]=\arg\min_{z \in \mathcal{F}}\normsz{z-x}$). This expression is derived by computing the gradient of the objective function $f(S_c)+g(V)$ with respect to the decision variables $S_c$: $\nabla_{S_c} (f(S_c)+g(V))= \nabla_{S_c}f(S_c) + \big(\frac{\partial V}{\partial S_c}\big)^T \nabla_V g(V)$, where $\frac{\partial V}{\partial S_c}=B_c$. 

\begin{rem}\label{rem:constfeas}
Note that in \eqref{eq:projgrad0} we are assuming that the feasible set $\mathcal{F}$ from \eqref{eq:opt0} and \eqref{eq:opt} remains constant and independent of time. In a real-world application, it would be time-varying, since the amount of power available may change over time, for example if delivered by renewable sources. However, when considering fast time scales, the change of $\mathcal{F}$ at subsequent instants tends to $0$ and can be neglected. We will observe this later in the test case in Section \ref{sec:sim}.
In any case, this work could be extended to a time-varying feasible set $\mathcal{F}_{(t)}$ as in \cite{tangrtopf2017}.
\end{rem}

Apart from the control inputs $S_c$, which are measurable, the state-of-the-art online feedback optimization approaches \eqref{eq:projgrad0} assume a noise-free deterministic knowledge of the full state $V$ to evaluate $\nabla_V g(V_{(t)})$ \cite{tangrtopf2017, hauswirth2017onlinePF}. Other approaches do not require full state measurements, but restrict $g$ to control only the subset of directly measured states \cite{bolognani2015reactivePF, anese2016optimal,colombino2019robustness}. 

\subsection{Measurements}\label{sec:meas}

Since \eqref{eq:projgrad0} requires $V$, and similarly the optimization problems \eqref{eq:opt0},\eqref{eq:opt} need $S_l$, neither can be solved without knowing $V$ or $S_{l}$. Therefore, we need to collect measurements. In distribution grids, there may be a set of heterogeneous noisy measurements, like voltage, current and power magnitudes and/or phasors. However, due to the scarce number of measurements, they may need to be complemented with the so-called pseudo-measurements \cite{schenato2014bayesian, picallo2017twostepSE}, i.e., low accuracy load forecasts, to make the system numerically observable \cite{baldwin1993power}, and thus be able to solve a SE problem. For simplicity, in this paper we will assume that we have a linearised measurement equation that contains the measurements coming from all sources (e.g. conventional remote terminal units, smart meters, phasor measurement units, pseudomeasurements, etc.):
\begin{equation}\label{eq:meas}
y = HV + \omega_y,
\end{equation}
where $y$ is the vector of measurements, $H$ is the matrix mapping the state to the measurements, and $\omega_y$ is the measurement noise. We assume that this noise is Gaussian with known probability distribution $\omega_y \sim \mathcal{N}(0,\Sigma_y)$, and that using the pseudo-measurements, the matrix $H$ has full-column rank, and thus the system is numerically observable \cite{baldwin1993power}. 

Since these pseudo-measurements can be simply load profiles for different kinds of nominal consumption (household, office building, etc.), and thus have a low accuracy, we will assign relatively large values to the corresponding covariance terms in $\Sigma_y$, indicating a large uncertainty \cite{picallo2017twostepSE}. 

\section{State Estimation for Feedback Optimization}\label{sec:SEoptim}

Instead of the standard online feedback optimization \eqref{eq:projgrad0}, in our approach we consider a more general and realistic scenario, where the whole state $V$ needs to be controlled, but neither the loads $S_l$ nor the full state $V$ are directly measured. To retrieve this information, we use the available measurements $y$ from \eqref{eq:meas} to build a dynamic SE. Then, we connect this SE as feedback to the controller of the online feedback optimization.

\subsection{Dynamic State Estimation}\label{sec:SEdyn}

Even though we are optimizing the steady-state power flow solution represented in \eqref{eq:PFeq} and \eqref{eq:PFapprox}, we consider the stochastic dynamic system induced on the grid state $V$ by a change in the controllable injected power $S_c$ and the uncontrollable stochastic loads $S_l$. We build this system by subtracting the power flow equations \eqref{eq:PFapprox} at subsequent times $(t)$:
\begin{equation}\label{eq:dyneq}\begin{array}{l}
V_{(t)} = V_{(t-1)} + B_c (S_{c,(t)}-S_{c,(t-1)}) + \omega_{l,(t)},
\end{array}
\end{equation}
where $\omega_{l,(t)}=B_l(S_{l,(t)}-S_{l,(t-1)})$ appears as a result of the time-varying load conditions $S_{l,(t)}$. This dynamic approach \eqref{eq:dyneq} allows to circumvent the lack of precise knowledge of $S_{l,(t)}$ by considering only its time variations as process noise: $\omega_{l,(t)} \sim \mathcal{N}(0,\Sigma_l)$. Using \eqref{eq:dyneq}, we design a Kalman filter based SE \cite{monticelli2000electric}, that at time $(t)$ takes the measurements $y_{(t)}$ as input and outputs the estimate $\hat{V}_{(t)}$:
\begin{equation}\label{eq:kalmanest}\arraycolsep=1pt
\begin{array}{rl}
\hat{V}_{(t)} = &  (I_d - K_{(t)} H)\big(\hat{V}_{(t-1)} +  B_c ( S_{c,(t)} - S_{c,(t-1)} ) \big) \\ & +  K_{(t)}y_{(t)}\\[0.1cm]
P_{(t)} = & (I_d - K_{(t)}H)(P_{(t-1)}+\Sigma_l) \\[0.1cm]
K_{(t)} = & (P_{(t-1)}+\Sigma_l)H^T\big(H(P_{(t-1)}+\Sigma_l)H^T + \Sigma_y\big)^{-1}, 


\end{array}
\end{equation}  
where $I_d$ is the identity matrix, $P_{(t)}$ denotes the covariance matrix of the voltage state estimate $\hat{V}_{(t)}$, and $K_{(t)}$ is the Kalman gain matrix minimizing the resulting covariance $P_{(t)}$: $K_{(t)}=\arg\min_K \text{trace}(P_{(t)})$.

\begin{rem}\label{rem:mean0}
Note that we are assuming that the noise $\omega_{l,(t)}$ in \eqref{eq:dyneq} has $0$ mean. This is not necessary true in practice, since the loads could drift in expectation depending on the hour of the day. However, similarly as in Remark \ref{rem:constfeas}, on the considered fast time scales, this drift tends to $0$ and can be neglected. We will observe this later in the test case in Section \ref{sec:sim}. Nevertheless, this work could be extended to a more general case with non-zero mean and time dependent noise parameters: $\omega_{l,(t)} \sim \mathcal{N}(\mu_{l,(t)},\Sigma_{l,(t)})$. \r{A load prediction could be used to estimate the drift term $\mu_{l,(t)}$, and then compensate it in the optimization step}.
\end{rem}

\subsection{Convergence of the Projected Gradient Descent}

\tikzstyle{block} = [draw, fill=white, rectangle, 
    minimum height=2em, minimum width=3em]
\tikzstyle{block2} = [draw, fill=None, rectangle, 
    minimum height=2em, minimum width=3em]
\tikzstyle{sum} = [draw, fill=white, circle, node distance=1cm]
\tikzstyle{input} = [coordinate]
\tikzstyle{output} = [coordinate]
\tikzstyle{pinstyle} = [pin edge={to-,thin,black}]

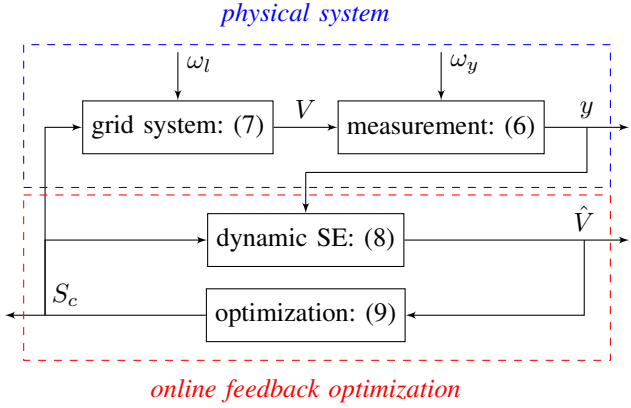
\begin{figure}[t]
\centering
\begin{tikzpicture}[auto, node distance=2cm,>=latex']
	\node [input] (input0) {};   
    \node [block, right of = input0, node distance = 2.3cm] (system) {grid system: \eqref{eq:dyneq}};
    \node [block, right of = system, node distance = 3.5cm] (meas) {measurement: \eqref{eq:meas}};
    \draw [->] (system) -- node [above,name=V] {$V$} (meas); 
    \node [output, right of = meas, node distance=2.5cm] (output0) {}; 
    \draw [->] (meas) -- node [name=y] {$y$}(output0);

	\node [input, below of = input0, node distance = 1.5cm] (input1) {};    
	\node [block, right of = input1, node distance = 4cm] (SE) { dynamic SE: \eqref{eq:kalmanest}};
	\node [input, above of = SE, node distance = 0.9cm] (inputm) {};
	\draw [-] (y) |- (inputm);	
	\draw [->] (inputm) -| (SE);	
	\node [output, below of = output0, node distance = 1.5cm] (output1) {};
	\draw [->] (SE) -- node [above,name=Vest,pos = 0.8 ] {$\hat{V}$} (output1); 
	
    \node [input, below of = input1, node distance = 1.0cm] (input2) {};
    \node [block, right of=input2, node distance=4.0cm] (grad) {optimization: \eqref{eq:projgrad}};
    \node [output, below of = output1, node distance = 1.0cm] (output2) {};
    \draw [->] (Vest) |- (grad);
    \draw [->] (grad) -- node [above,name=u,pos = 0.7] {$S_c$} 
    node [below,name=upos,pos = 0.8] {} (input2);
    \draw [->] (upos) |- (system);
    \draw [->] (upos) |- (SE);

   	\node [input, above of = system, node distance = 1cm] (distd) {};
   	\node [input, above of = meas, node distance = 1cm] (disty) {};
   	\draw [->] (distd) -- node [near start, name=wd, right] {$\omega_l$} (system);
   	\draw [->] (disty) -- node [near start, name=wy, right] {$\omega_y$} (meas);

   	\draw [red,dashed] ([yshift=0.6cm,xshift=0.25cm]input1) -- ([yshift=0.6cm,xshift=-0.25cm]output1) -- ([yshift=-0.6cm,xshift=-0.25cm]output2) -- ([yshift=-0.6cm,xshift=0.25cm]input2) -- ([yshift=0.6cm,xshift=0.25cm]input1);
   	\node[text=red, below of = grad, node distance = 1cm] (feedback) {\textit{online feedback optimization}};

   	\draw [blue,dashed] ([yshift=0.7cm,xshift=0.25cm]input1) -- ([yshift=0.7cm,xshift=-0.25cm]output1) -- ([yshift=1.1cm,xshift=-0.25cm]output0) -- ([yshift=1.1cm,xshift=0.25cm]input0) -- ([yshift=0.7cm,xshift=0.25cm]input1);
   	\node[text=blue, above of = grad, node distance = 4cm] (system) {\textit{physical system}};

\end{tikzpicture}

\caption{Block diagram of the interconnected systems: distribution grid system \eqref{eq:dyneq}, measurements \eqref{eq:meas}, state estimation \eqref{eq:kalmanest} and online feedback optimization \eqref{eq:projgrad}. 
}\label{fig:bldiag}
\end{figure}

Instead of using the state $V_{(t)}$ as in \eqref{eq:projgrad0}, we use $\hat{V}_{(t)}$ from \eqref{eq:kalmanest} as feedback to the projected gradient descent:
\begin{equation}\label{eq:projgrad}
S_{c,(t+1)} = \Pi_{\mathcal{F}} \big[ S_{c,(t)} - \epsilon \big(\nabla_{S_c} f(S_{c,(t)}) + B_c^T \nabla_V g(\hat{V}_{(t)}) \big) \big].
\end{equation} 

The interconnection of these subsystems \eqref{eq:kalmanest} and \eqref{eq:projgrad} with the stochastic dynamic system of the grid \eqref{eq:dyneq} and the measurement equation \eqref{eq:meas}, results in the closed-loop system represented in Figure~\ref{fig:bldiag}. However, even if the SE \eqref{eq:kalmanest} converges in expectation to unbiased estimate with finite variance, and the online feedback optimization \eqref{eq:projgrad}, converges asymptotically to the solution of the OPF \eqref{eq:opt}; this does not guarantee that their interconnection will inherit these properties. Therefore, we need to verify the overall stability, and provide the corresponding convergence rates, to ensure the desired behaviour of our approach. 

\begin{ass}\label{ass:thm}
To prove this stability and convergence, we need the following technical assumptions:
\begin{enumerate}[leftmargin=*]
\item Both functions $f(\cdot),g(\cdot)$ have Lipschitz continuous gradients with parameters $L_f,L_g$ respectively (i.e. $\normsz{\nabla_{S_c} f(S_{c,1})-\nabla_{S_c} f(S_{c,2})} \leq L_f \normsz{S_{c,1}-S_{c,2}}$, $\normsz{\nabla_{V} g(V_{1})-\nabla_{V} g(V_{2})} \leq L_g \normsz{V_{1}-V_{2})}$).
\item The function $g(\cdot)$ is convex.
\item The function $f(\cdot)$ is $\eta$-strongly-convex (i.e. $(\nabla_{S_{c}} f(S_{c,1})-\nabla_{S_{c}} f(S_{c,2}))^T(S_{c,1}-S_{c,2}) \geq \eta \normsz{S_{c,1}-S_{c,2}}^2$). 
\item The measurements matrix $H$ has full-column rank.
\item The noise covariance matrices $\Sigma_y,\Sigma_l$ are bounded. 
\item The noise covariance matrix $\Sigma_y$ has full rank. 
\item \label{assi:lbPt} The covariance matrices $P_{(t)}$ are lower bounded (i.e., there exists a parameter $\check{\sigma}>0$ so that $P_{(t)} \succeq \check{\sigma}I_d \; \forall (t)$).
\end{enumerate}
\end{ass}

These assumptions are standard and relatively easy to satisfy: The Lipschitz continuity assumption is typically true for most functions used in these kind of applications. The convexity of $g(\cdot)$ will usually hold if using penalization functions like the one described after \eqref{eq:opt0}. Note that strict convexity is not required for $g(\cdot)$. If necessary, the strong convexity of $f(\cdot)$ could always be satisfied by adding some quadratic regularization term to the objective function, like in \cite{anese2016optimal}. This is a standard procedure to achieve well-posed problems. The full-column rank of $H$ is easy to achieve if including pseudo-measurements, see Section~\ref{sec:meas}. Since in power grids physical quantities are typically bounded, so will be the noise covariance matrices $\Sigma_y,\Sigma_l$. A full rank covariance matrix $\Sigma_y$ is to be expected, since measurement noises originate in different sensors. A sufficient condition to have a lower bounded $P_{(t)}$ is a full rank $\Sigma_l$, and thus a uniformly completely controllable system \cite[Lemma~7.2]{jazwinski1970stochfilt}, but this is not necessarily true. Even after using a Kron reduction to eliminate zero-injection nodes, there may be nodes with controllable deterministic power injection, but no stochastic input. Then $B_l$ is not full-row rank, and thus $\Sigma_l$ is not full rank, since $\omega_{l,(t)}=B_l(S_{l,(t)}-S_{l,(t-1)})$. However, it would be possible to add some stochastic uncertainty in this nodes, like fictitious low loads or power injection noises, so that $[B_c,B_l]$ becomes a full-row rank matrix, and $\Sigma_l$ full rank.

Given the convexity of $f(\cdot),g(\cdot)$, we can then define the instantaneous global optimal value of \eqref{eq:opt} at time $(t)$: $S_{c,(t)}^*$, depending on the stochastic realization of $S_{l,(t)}$ at time $(t)$; and $S_{c}^{\mathbb{E}*}$ the expected global optimal value using the expected loads $\expect{S_{l,(t)}}$ in \eqref{eq:opt}. Since $\expect{\omega_{l,(t)}}=0$, $\expect{S_{l,(t)}}$ is constant in time $(t)$ and so is $S_{c}^{\mathbb{E}*}$.

\begin{thm}\label{thm:conv}
Under Assumption \ref{ass:thm} and choosing for the projected gradient descent \eqref{eq:projgrad} a descent rate $\epsilon < \frac{2\eta}{(L_f+\normsz{B_c}^2 L_g)^2+\normsz{B_c}^2 L_g^2}$, the system in Figure~\ref{fig:bldiag} has the following stability and convergence results:
\begin{itemize}[leftmargin=*]
\item Exponential convergence towards the unbiased and optimal solution: there exist constants $C_{V,1},C_{V,2},C_{S,1},C_{V,2} > 0$ such that 
\begin{equation}\label{eq:convunbiassol}\begin{array}{l}
\normsz{\expect{\hat{V}_{(t)} \hspace{-0.05cm} - \hspace{-0.05cm} V_{(t)}}}^2 
\leq C_{V,1} e^{-C_{V,2}t}\normsz{\expect{\hat{V}_{(0)} \hspace{-0.05cm} - \hspace{-0.05cm} V_{(0)}}}^2 \stackrel{t \rightarrow \infty}{\rightarrow} 0 \\[0.2cm] 

\normsz{S_{c,(t)}-S_{c}^{\mathbb{E}*}} 
 \leq t C_{S,1} e^{-C_{S,2}t} \normsz{S_{c,(0)}-S_{c,(0)}^{\mathbb{E}*}} \stackrel{t \rightarrow \infty}{\rightarrow} 0
\end{array}
\end{equation}
\item Exponential bounded mean square (stochastic stability): there exist constants $C_V,C_{V,3},C_{V,4},C_{S},C_{S,3},C_{V,4} > 0$ such that
\begin{equation}\label{eq:convboundedmsq}\begin{array}{l}
\expect{\normsz{\hat{V}_{(t)}-V_{(t)}}^2} \\
\leq C_{V} + C_{V,3}e^{-C_{V,4}t}\expect{\normsz{\hat{V}_{(0)}-V_{(0)}}^2}
\stackrel{t \rightarrow \infty}{\rightarrow} C_{V} < \infty\\[0.2cm]

\expect{\normsz{S_{c,(t)} \hspace{-0.05cm} - \hspace{-0.05cm} S_{c,(t)}^*}} \\
\leq C_{S} + tC_{S,3}e^{-C_{S,4}t}\expect{\normsz{S_{c,(0)} \hspace{-0.05cm} - \hspace{-0.05cm} S_{c,(0)}^*}} \stackrel{t \rightarrow \infty}{\rightarrow} C_{S}  < \infty
\end{array}
\end{equation}
\end{itemize}
\end{thm}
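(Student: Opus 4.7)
The plan is to exploit a structural feature of the loop in Figure~\ref{fig:bldiag}: although the feedback optimization \eqref{eq:projgrad} is driven by $\hat{V}$, the estimation error $\tilde{V}_{(t)} := \hat{V}_{(t)} - V_{(t)}$ is in fact decoupled from the control variable. Substituting \eqref{eq:dyneq}, \eqref{eq:meas} and \eqref{eq:kalmanest}, the terms $B_c(S_{c,(t)}-S_{c,(t-1)})$ and $K_{(t)}HV_{(t-1)}$ cancel, leaving the purely noise-driven recursion
\begin{equation*}
\tilde{V}_{(t)} = (I_d - K_{(t)}H)\tilde{V}_{(t-1)} - (I_d - K_{(t)}H)\omega_{l,(t)} + K_{(t)}\omega_{y,(t)}.
\end{equation*}
This is the standard cascade trick used in certainty-equivalence analyses, and it allows us to first study the Kalman filter in isolation and then feed its error back as an exogenous perturbation of the gradient scheme.

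The next step is to show that the matrix $A_{(t)} := I_d - K_{(t)}H$ is a uniform contraction. Writing $\bar{P}_{(t)} = P_{(t-1)}+\Sigma_l$ and using the information form of \eqref{eq:kalmanest}, one has the identity $A_{(t)} = P_{(t)}\bar{P}_{(t)}^{-1}$ with $P_{(t)}^{-1} = \bar{P}_{(t)}^{-1} + H^T\Sigma_y^{-1}H$. Assumptions~\ref{ass:thm}.\ref{assi:lbPt} and 4--6 guarantee that $\bar{P}_{(t)}$ is uniformly upper and lower bounded and that $H^T\Sigma_y^{-1}H \succeq cI_d$ for some $c>0$; combining these yields a uniform bound $\normsz{A_{(t)}} \leq \rho_K < 1$. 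Taking expectations and using $\mathbb{E}[\omega_{l,(t)}]=\mathbb{E}[\omega_{y,(t)}]=0$ immediately gives the first estimate in \eqref{eq:convunbiassol}, with $C_{V,1},C_{V,2}$ expressed in $\rho_K$. For the mean-square bound, the identity $\mathbb{E}[\tilde{V}_{(t)}\tilde{V}_{(t)}^T] = A_{(t)}\mathbb{E}[\tilde{V}_{(t-1)}\tilde{V}_{(t-1)}^T]A_{(t)}^T + A_{(t)}\Sigma_l A_{(t)}^T + K_{(t)}\Sigma_y K_{(t)}^T$ together with the uniform bounds on $\Sigma_l,\Sigma_y$ and $K_{(t)}$ produces a Lyapunov recursion whose unique fixed point is bounded, giving the first estimate in \eqref{eq:convboundedmsq}.

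The second ingredient is a certainty-equivalence analysis of \eqref{eq:projgrad}. Define the ``expected-load'' surrogate objective $F(S_c) := f(S_c) + g(V_0 + B_c S_c + B_l\expect{S_l})$, whose gradient is Lipschitz with constant $L_F \leq L_f + \normsz{B_c}^2 L_g$ and which is $\eta$-strongly convex by Assumption~\ref{ass:thm}.1--3. Non-expansiveness of $\Pi_\mathcal{F}$ and the standard one-step contraction of projected gradient descent on such an $F$ give, for $\epsilon$ as in the theorem statement, $\normsz{T_\epsilon(S) - T_\epsilon(S_c^{\mathbb{E}*})} \leq \rho_F\normsz{S - S_c^{\mathbb{E}*}}$ with $\rho_F^2 = 1-2\eta\epsilon + L_F^2\epsilon^2 < 1$. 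Since the actual update uses $\nabla_V g(\hat{V}_{(t)})$ rather than $\nabla_V g(V_0+B_cS_{c,(t)}+B_l\expect{S_l})$, the Lipschitz bound on $\nabla_V g$ yields the cascaded recursion
\begin{equation*}
\normsz{\expect{S_{c,(t+1)} - S_c^{\mathbb{E}*}}} \leq \rho_F\,\normsz{\expect{S_{c,(t)} - S_c^{\mathbb{E}*}}} + \epsilon\normsz{B_c}L_g\normsz{\expect{\tilde{V}_{(t)}}},
\end{equation*}
and an analogous inequality for the squared norms, with an additive noise term of variance $\normsz{B_l}^2\mathrm{tr}(\Sigma_l)$ accounting for the difference between $S_c^{\mathbb{E}*}$ and the instantaneous optimum $S_{c,(t)}^*$ via sensitivity of strongly convex programs.

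Finally, I would unroll these scalar recursions. Because $\normsz{\expect{\tilde V_{(t)}}}$ and its mean-square analogue decay geometrically at rate $\rho_K$ (or are bounded), the solution of the cascade is a sum of the form $\sum_{k=0}^{t}\rho_F^{t-k}\rho_K^k$, which in the worst case (near resonance $\rho_F \approx \rho_K$) is bounded by $t\max(\rho_F,\rho_K)^t$, explaining the polynomial prefactor $t$ appearing in \eqref{eq:convunbiassol}--\eqref{eq:convboundedmsq}. Absorbing constants into the $C_{V,\cdot},C_{S,\cdot}$ then produces both stated bounds. The main obstacle I anticipate is the careful justification of the uniform contraction $\normsz{A_{(t)}} \leq \rho_K<1$: the assumption that $P_{(t)}$ is lower bounded is used in an essential way, and the algebra relating $A_{(t)}$ to the information matrices must be handled with enough care to obtain explicit constants, rather than merely appealing to generic Kalman-filter stability results.
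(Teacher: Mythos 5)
Your overall architecture coincides with the paper's: separation of the estimation-error dynamics from the nonlinear feedback, a one-step contraction of the projected gradient step with rate $\sqrt{1-2\eta\epsilon+\epsilon^2L^2}$, $L=L_f+\normsz{B_c}^2L_g$ (the paper's Lemma~\ref{lemma:LipConst}), a Lipschitz/sensitivity bound for the drift of the instantaneous optimizer driven by $\omega_{l}$ (the paper's Lemma~\ref{lemma:lipOp}, which you replace by an appeal to parametric sensitivity of strongly convex programs), and the convolution $\sum_k\rho_F^{t-k}\rho_K^{k}\leq t\max(\rho_F,\rho_K)^t$ that produces the prefactor $t$. The difference, and the genuine gap, is in how you certify stability of the Kalman-filter error. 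You claim that the uniform bounds on $P_{(t)}$ together with $H^T\Sigma_y^{-1}H\succeq cI_d$ yield a uniform \emph{spectral-norm} contraction $\normsz{I_d-K_{(t)}H}\leq\rho_K<1$. This deduction does not hold: $I_d-K_{(t)}H=P_{(t)}(P_{(t-1)}+\Sigma_l)^{-1}$ is a product of two symmetric positive definite matrices, so its eigenvalues lie in $(0,1)$, but it is not symmetric and its $2$-norm can exceed one (simple $2\times2$ examples with well-conditioned, bounded $P_{(t-1)}+\Sigma_l$ and $H^T\Sigma_y^{-1}H$ give $\normsz{I_d-KH}>1$); moreover, for a time-varying product of such matrices, per-step eigenvalue information alone implies nothing about convergence. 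What the assumptions actually give is a contraction in the time-varying weighted norm: $(I_d-K_{(t+1)}H)^TP_{(t+1)}^{-1}(I_d-K_{(t+1)}H)\preceq(1-\psi\check{\sigma})P_{(t)}^{-1}$ with $\psi$ controlled by $\lambda_{\min}\big(H^T(H(P_{(t)}+\Sigma_l)H^T+\Sigma_y)^{-1}H\big)$. This is precisely the paper's route: the Lyapunov function $\mathcal{V}_{(t)}(e)=e^TP_{(t)}^{-1}e$ together with the stochastic Lyapunov theorem of \cite[Theorem~2]{tarn1976observers}; converting back to the Euclidean norm is what introduces the factor $\hat{\sigma}/\check{\sigma}$ in $C_{V,1},C_{V,3}$. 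The same repair is needed for your covariance recursion, whose boundedness again rests on exponential stability of the time-varying transition matrices. Note also that the uniform upper bound on $P_{(t)}$ you invoke is not among the assumptions; it must be derived from uniform complete observability (full column rank of $H$, full-rank $\Sigma_y$, bounded $\Sigma_l$), as the paper does via \cite[Lemma~7.1]{jazwinski1970stochfilt}.

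A secondary, smaller issue: in your displayed recursion for $\normsz{\expect{S_{c,(t+1)}-S_c^{\mathbb{E}*}}}$ you implicitly pass the expectation through the nonlinear projection $\Pi_{\mathcal{F}}$, which is not licensed. The paper avoids this by arguing pathwise: for the stochastic-stability claim it bounds $\expect{\normsz{S_{c,(t)}-S_{c,(t)}^*}}$ via Jensen and the mean-square estimation bound, and for the unbiasedness claim it runs the deterministic recursion driven by $\normsz{\expect{e_{(t)}}}$. With these two corrections your argument becomes essentially the paper's proof.
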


Both error terms $\hat{V}_{(t)}-V_{(t)}$ and $S_{c,(t)}-S_{c,(t)}^*$ are stochastic processes that quantify the estimation and the optimality errors respectively. The first result \eqref{eq:convunbiassol} establishes that the expected values of these errors converge towards $0$, while the second result \eqref{eq:convboundedmsq} bounds their covariances. Both are required to conclude that the closed-loop stochastic dynamic system in Figure~\ref{fig:bldiag} converges and is stable. Moreover, given \eqref{eq:PFapprox}, $V_{(t)}-V_{(t)}^* = B_c(S_{c,(t)}-S_{c,(t)}^*)$, so the convergence of $V_{(t)}$ is a result of the convergence of $S_{c,(t)}$. Note that the convergence of $\hat{V}_{(t)}-V_{(t)}$ is faster than the one of $S_{c,(t)}-S_{c,(t)}^*$ due to the $t$ multiplying the exponential in the later case.

These constants $C_{V},C_{S},C_{V,(\cdot)},C_{S,(\cdot)}$ depend on the parameters $\eta,L_f,L_g,B_c,B_l,H,\check{\sigma},\Sigma_l,\Sigma_y$. Their expressions can be found in the proof in Appendix \ref{app:thmproof}. Moreover, $C_{V},C_{S}$ are monotone increasing with respect to $\text{trace}(\Sigma_l),\text{trace}(\Sigma_y)$. 
So they could be lowered if using faster time scales, since the loads would be expected to change less and thus $\text{trace}(\Sigma_l)$ would be lower.

\section{Test Case}\label{sec:sim}

We validate our proposed method by simulating the behaviour of a test distribution grid during $30$ minutes with $1$-second time intervals (1800 iterations). 

\subsection{Settings}\label{subsec:simDef}

\begin{figure}
\centering
\includegraphics[width=8cm,height=6.4cm]{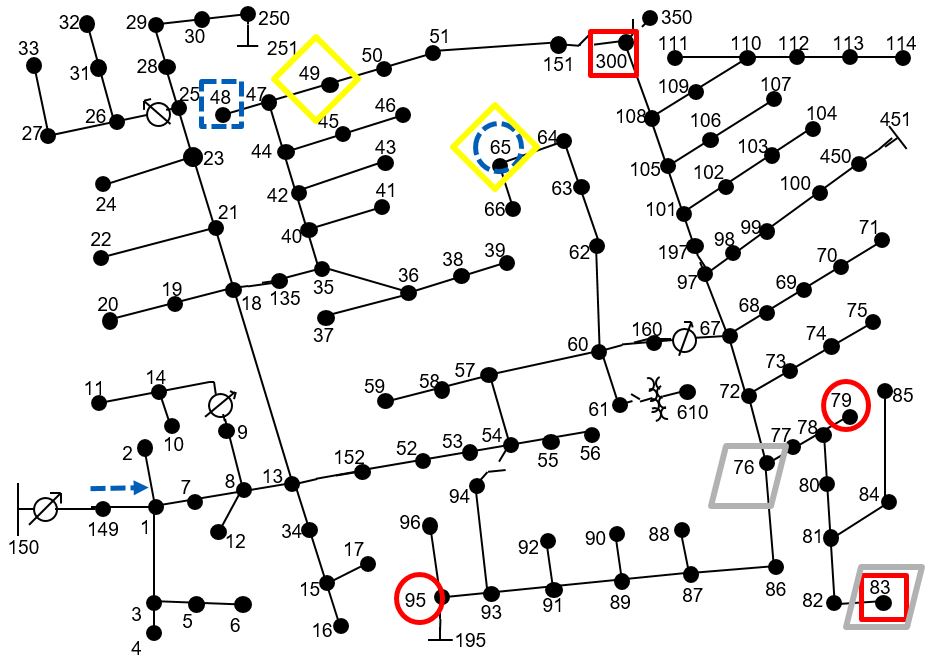}    
\caption{IEEE 123-bus test feeder \cite{kersting1991radial}. \textbf{Measurements:} red circle~=~voltage phasor, red square~=~voltage magnitude, blue dashed circle~=~current phasor, blue dashed square~=~current magnitude, blue dashed arrow~=~line current phasor. \textbf{Distributed generation:} yellow diamond~=~solar, grey parallelogram~=~wind.} 
\label{fig:123bus}
\end{figure}

\noindent Simulation data settings:

\begin{itemize}
\item System: 3-phase, unbalanced IEEE 123-bus test feeder \cite{kersting1991radial} (see Fig.~\ref{fig:123bus}). 

\item Load Profiles: $1$-second resolution data of the ECO data set \cite{ECOdata}. To adapt the active and reactive loads to this grid, we aggregate households and rescale them to the base loads of the 123-bus feeder.

\item Measurements for the SE (see Figure~\ref{fig:123bus}): Voltage measurements are placed at buses $95,79,300$ and $83$, current measurements at buses $65$ and $48$, and branch current phasor measurements at branch $149$-$1$ after the regulator.  As in \cite{picallo2017twostepSE}, we assign a relatively low value (1\% standard deviation) to the covariance terms in $\Sigma_y$ corresponding to these measurements.

\item Pseudo-measurements: We build these profiles by averaging the load profiles in the ECO data set \cite{ECOdata}. As in \cite{picallo2017twostepSE}, we assign a relatively high value (50\% standard deviation) to the corresponding covariance terms in $\Sigma_y$. 

\item Distributed Generation: Similar to \cite{picallo2018stochOPF}, solar energy is introduced in the three phases of nodes $49$ and $65$, and wind energy in nodes $76$ and $83$, see Figure~\ref{fig:123bus}. Their profiles are simulated using a $1$-minute solar irradiation profile and a $2$-minute wind speed profiles from \cite{solarprofile, windprofile}. Generation is assumed constant between samples.
\end{itemize}

Optimization settings:
\begin{itemize}
\item Objective function $f(S_c)$: We consider a quadratic cost on the controllable resources that penalises not using all the available active power in the renewable resources, and also using any reactive power: $f_i((P,Q)_{c,i})=\frac{1}{2}(P_{c,i}-P_{i,\max})^T(P_{c,i}-P_{i,\max})+\frac{1}{2}Q_{c,i}^TQ_{c,i}$. This function is strongly convex and has a Lipschitz continuous gradient, with parameters $\eta = L_f=1$. 

\item Objective function $g(V)$: As mentioned after \eqref{eq:opt0}, we use a $g(V)$ penalizing voltage violations, with parameter $\rho=100$. This function is convex and its gradient is Lipschitz continuous with parameter $L_g=100$. We use the voltage limits $V_{\max} = 1.06$ p.u. and $V_{\min} = 0.94$ p.u. as in \cite{hauswirth2017onlinePF}.
 
\item Feasible space $\mathcal{F}$: As mentioned after \eqref{eq:opt0}, we consider a feasible space limiting both the active and reactive power according to the profiles of distributed generation. This could easily be extended to more complex feasible space using the projection proposed in \cite{anese2016optimal}.

\item Descent rate: Experimentally, we have found that with a descent rate $\epsilon=0.001$ the proposed method is stable.
\end{itemize}

\subsection{Results}

\begin{figure}
\centering
\includegraphics[width=8.8cm]{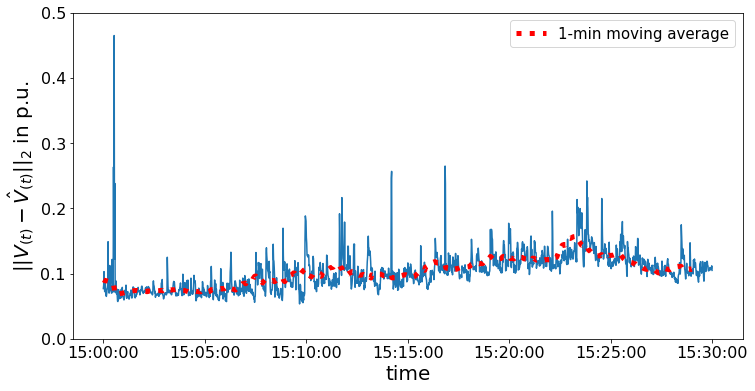}    
\caption{Euclidean norm at each time instant $(t)$ of the error between the estimated state $\hat{V}_{(t)}$ and the true state $V_{(t)}$.}
\label{fig:errorestim}
\end{figure}

\begin{figure}
\centering
\includegraphics[width=8.8cm]{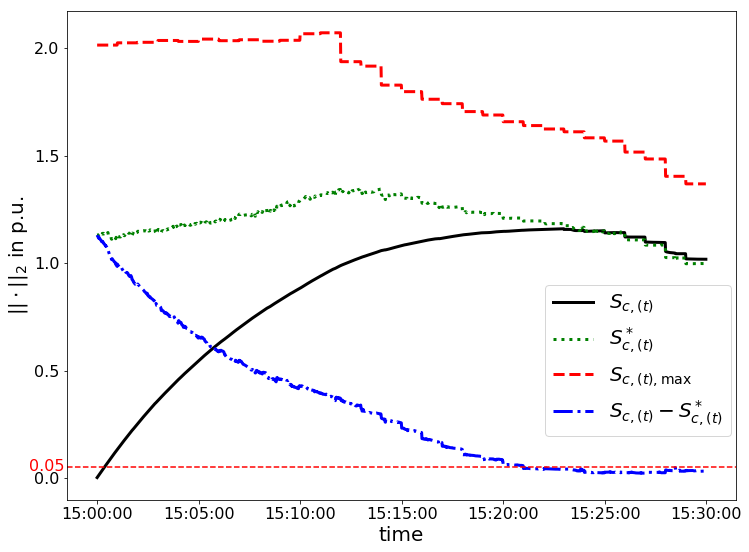}    
\caption{Euclidean norm at each time instant $(t)$ of the proposed set-points $S_{c,(t)}$, the optimum $S_{c,(t)}^*$, the maximum power available $S_{c,(t),\max}$ according to the power profiles represented in the feasible set $\mathcal{F}$, and the optimality error $S_{c,(t)}-S_{c,(t)}^*$. A line marks how the error stabilizes around $0.05$p.u..}
\label{fig:Scusederror}
\end{figure}

\begin{figure}
\centering
\includegraphics[width=8.8cm]{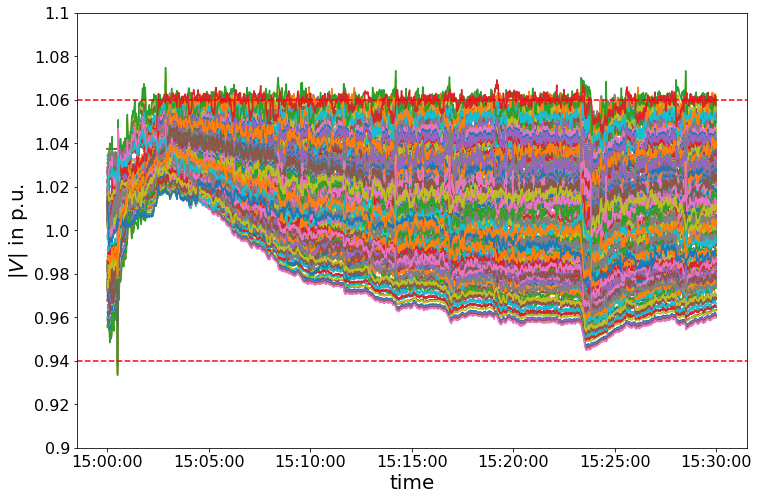}    
\caption{Magnitude at each time instant $(t)$ of the true voltage value $V_{(t)}$ for every node during the simulation time.}
\label{fig:nodesVtime}
\end{figure}

Given the theoretical result presented in Theorem \ref{thm:conv}, we monitor the optimization error norm $\normsz{S_{c,(t)}-S_{c,(t)}^*}$, and the estimation error norm $\normsz{\hat{V}_{(t)}-V_{(t)}^*}$. In Figure~\ref{fig:errorestim} and \ref{fig:Scusederror} we can observe how both the estimation error and the optimization errors decrease to a low value, less than $1\%$ for the estimation case, and then remain stable. These observations coincide with the results \eqref{eq:convboundedmsq} in Theorem \ref{thm:conv}, since the expected value of the norm is bounded, but does not necessarily converge to $0$. Despite that, it converges to close to $0$. This is a consequence of using fast time scales and thus a low covariance $\Sigma_l$, which in turn produces low constant bounds $C_V,C_S$.

The estimation error converges very quickly in a few seconds (iterations), while the optimization error needs about 20 min (1200 iterations). This supports the convergence rates of Theorem \ref{thm:conv}: the optimization error is slower due to the term $t$ multiplying the exponential part of the bound for the set-points case in \eqref{eq:convunbiassol} and \eqref{eq:convboundedmsq}.

From the curve of $\normsz{S_{c,(t)}^*}$ in Figure~\ref{fig:Scusederror}, it can be inferred that the optimum solution $S_{c,(t)}^*$ is indeed time-varying.
Nonetheless, it is remarkable that we observe an accurate convergence as predicted by Theorem \ref{thm:conv}, despite the fact that our theoretical assumption of zero mean process noise $\omega_{l,(t)}$ in \eqref{eq:dyneq} is not necessarily met for the true consumption data \cite{ECOdata} used in this simulation. This supports the statement in Remark \ref{rem:mean0} that for fast time-scales this drift can be neglected.

The feasible set $\mathcal{F}$ that we have used is not constant due to the time-varying profiles in solar radiation and wind speed \cite{solarprofile, windprofile}, see $\normsz{S_{c,(t),\max}}$ in Figure~\ref{fig:Scusederror}. However, this has not affected the convergence of our method, because the changes in $\mathcal{F}$ are also negligible at these fast time-scales, as mentioned in Remark \ref{rem:constfeas}.

Moreover, we obtain this accurate convergence despite the potential error due to the linear approximation done in \eqref{eq:PFapprox} and \eqref{eq:meas}. This is a consequence of using the real-time feedback optimization, since its robustness helps to correct potential model mismatches \cite{doerfler2019autonomousgrids}.

Finally, it can be observed in Figure~\ref{fig:nodesVtime} that the voltage magnitudes remain all the time within limits almost for every node. There are low violations \r{taking place at different nodes depending on the time step. These violations} may be due to the estimation uncertainty represented in the covariance matrix $P_{(t)}$ in \eqref{eq:kalmanest}, and the fact that we are not enforcing hard state constraints in \eqref{eq:opt}, but only penalising their violations. 

\section{Conclusions}\label{sec:conc}

In this paper we have proposed how to add a State Estimation (SE) to the online feedback optimization of the Optimal Power Flow (OPF), in order to control unmeasured states. We have formally proven that the interconnected system of SE and feedback optimization is stochastically stable and converges to the true state and optimal solution, respectively. Moreover, we have observed in a simulated test case how our method succeeds in driving the controllable elements towards near-optimal set-points while keeping an accurate state estimate.

Future work could include considering a nonlinear power flow and measurements equation; biased pseudo-measurements and process noise, to have non-zero mean time-varying loads; or using the uncertainty represented in the estimation error covariance matrix to increase the voltage restrictions and achieve fewer and smaller voltage violations.

\bibliographystyle{IEEEtran}
\bibliography{IEEEabrv,ifacconf}

\begin{thebibliography}{10}
\providecommand{\url}[1]{#1}
\csname url@samestyle\endcsname
\providecommand{\newblock}{\relax}
\providecommand{\bibinfo}[2]{#2}
\providecommand{\BIBentrySTDinterwordspacing}{\spaceskip=0pt\relax}
\providecommand{\BIBentryALTinterwordstretchfactor}{4}
\providecommand{\BIBentryALTinterwordspacing}{\spaceskip=\fontdimen2\font plus
\BIBentryALTinterwordstretchfactor\fontdimen3\font minus
  \fontdimen4\font\relax}
\providecommand{\BIBforeignlanguage}[2]{{%
\expandafter\ifx\csname l@#1\endcsname\relax
\typeout{** WARNING: IEEEtran.bst: No hyphenation pattern has been}%
\typeout{** loaded for the language `#1'. Using the pattern for}%
\typeout{** the default language instead.}%
\else
\language=\csname l@#1\endcsname
\fi
#2}}
\providecommand{\BIBdecl}{\relax}
\BIBdecl

\bibitem{molzahn2017survey}
D.~K. {Molzahn}, F.~{D{\"o}rfler}, H.~{Sandberg}, S.~H. {Low},
  S.~{Chakrabarti}, R.~{Baldick}, and J.~{Lavaei}, ``A survey of distributed
  optimization and control algorithms for electric power systems,''
  \emph{{IEEE} Trans. Smart Grid}, vol.~8, no.~6, pp. 2941--2962, Nov. 2017.

\bibitem{bolognani2015reactivePF}
S.~{Bolognani}, R.~{Carli}, G.~{Cavraro}, and S.~{Zampieri}, ``Distributed
  reactive power feedback control for voltage regulation and loss
  minimization,'' \emph{{IEEE} Trans. Autom. Control}, vol.~60, no.~4, pp.
  966--981, Apr. 2015.

\bibitem{anese2016optimal}
E.~Dall'Anese and A.~Simonetto, ``Optimal power flow pursuit,'' \emph{{IEEE}
  Trans. Smart Grid}, vol.~9, no.~2, pp. 942--952, Mar. 2018.

\bibitem{tangrtopf2017}
Y.~{Tang}, K.~{Dvijotham}, and S.~{Low}, ``Real-time optimal power flow,''
  \emph{{IEEE} Trans. Smart Grid}, vol.~8, no.~6, pp. 2963--2973, Nov. 2017.

\bibitem{hauswirth2017onlinePF}
A.~Hauswirth, A.~Zanardi, S.~Bolognani, F.~D{\"o}rfler, and G.~Hug, ``Online
  optimization in closed loop on the power flow manifold,'' in \emph{2017 IEEE
  Manchester PowerTech}, Jun. 2017, pp. 1--6.

\bibitem{colombino2019robustness}
M.~Colombino, J.~W. Simpson-Porco, and A.~Bernstein, ``Towards robustness
  guarantees for feedback-based optimization,'' in \emph{58th IEEE Conf.
  Decision and Control}, Dec. 2019.

\bibitem{schenato2014bayesian}
L.~Schenato, G.~Barchi, D.~Macii, R.~Arghandeh, K.~Poolla, and A.~V. Meier,
  ``Bayesian linear state estimation using smart meters and pmus measurements
  in distribution grids,'' in \emph{IEEE SmartGridComm}, Nov. 2014, pp.
  572--577.

\bibitem{picallo2017twostepSE}
M.~Picallo, A.~Anta, A.~Panosyan, and B.~{De Schutter}, ``A two-step
  distribution system state estimator with grid constraints and mixed
  measurements,'' in \emph{IEEE Power Systems Computation Conference}, Jun.
  2018.

\bibitem{abur2004power}
A.~Abur and A.~G. Exposito, \emph{Power System State Estimation: Theory and
  Implementation}.\hskip 1em plus 0.5em minus 0.4em\relax CRC Press, 2004.

\bibitem{picallo2018stochOPF}
M.~Picallo, A.~Anta, and B.~{De Schutter}, ``Stochastic optimal power flow in
  distribution grids under uncertainty from state estimation,'' in \emph{IEEE
  Conf. Decision and Control}, Dec. 2018.

\bibitem{zhao2019revsdynse}
J.~{Zhao}, A.~{Gómez-Expósito}, M.~{Netto}, L.~{Mili}, A.~{Abur},
  V.~{Terzija}, I.~{Kamwa}, B.~{Pal}, A.~K. {Singh}, J.~{Qi}, Z.~{Huang}, and
  A.~P.~S. {Meliopoulos}, ``Power system dynamic state estimation: Motivations,
  definitions, methodologies, and future work,'' \emph{IEEE Transactions on
  Power Systems}, vol.~34, no.~4, pp. 3188--3198, Jul. 2019.

\bibitem{kersting1991radial}
W.~Kersting, ``Radial distribution test feeders,'' \emph{{IEEE} Trans. Power
  Syst.}, vol.~6, no.~3, pp. 975--985, 1991.

\bibitem{low2014exactness}
S.~H. {Low}, ``Convex relaxation of optimal power flow—part ii: Exactness,''
  \emph{{IEEE} Trans. Control Netw. Syst.}, vol.~1, no.~2, pp. 177--189, Jun.
  2014.

\bibitem{bolognani2016existence}
S.~Bolognani and S.~Zampieri, ``On the existence and linear approximation of
  the power flow solution in power distribution networks,'' \emph{{IEEE} Trans.
  Power Syst.}, vol.~31, no.~1, pp. 163--172, 2016.

\bibitem{molzahn2019surveyrel}
D.~K. Molzahn and I.~A. Hiskens, ``{A Survey of Relaxations and Approximations
  of the Power Flow Equations},'' \emph{Foundations and Trends in Electric
  Energy Systems}, vol.~4, no. 1-2, pp. 1--221, February 2019.

\bibitem{doerfler2019autonomousgrids}
F.~{D{\"o}rfler}, S.~{Bolognani}, J.~W. {Simpson-Porco}, and S.~{Grammatico},
  ``Distributed control and optimization for autonomous power grids,'' in
  \emph{18th European Control Conference (ECC)}, Jun. 2019, pp. 2436--2453.

\bibitem{baldwin1993power}
T.~Baldwin, L.~Mili, M.~Boisen, and R.~Adapa, ``Power system observability with
  minimal phasor measurement placement,'' \emph{{IEEE} Trans. Power Syst.},
  vol.~8, no.~2, pp. 707--715, 1993.

\bibitem{monticelli2000electric}
A.~Monticelli, ``Electric power system state estimation,'' \emph{Proc. {IEEE}},
  vol.~88, no.~2, pp. 262--282, 2000.

\bibitem{jazwinski1970stochfilt}
A.~H. Jazwinski, ``Mathematics in science and engineering,'' \emph{Stochastic
  processes and filtering theory}, vol.~64, 1970.

\bibitem{ECOdata}
C.~Beckel, W.~Kleiminger, R.~Cicchetti, T.~Staake, and S.~Santini, ``The {ECO}
  data set and the performance of non-intrusive load monitoring algorithms,''
  in \emph{Proc. 1st ACM Conf. on Embedded Systems for Energy-Efficient
  Buildings}, 11 2014.

\bibitem{solarprofile}
HelioClim-3, ``{HelioClim-3 Database of Solar Irradiance},''
  \url{http://www.soda-pro.com/web-services/radiation/helioclim-3-archives-for-free},
  [Online]. Accessed: 2017-12-01.

\bibitem{windprofile}
MERRA-2, ``{The Modern-Era Retrospective analysis for Research and
  Applications, Version 2 (MERRA-2) Web service},''
  \url{http://www.soda-pro.com/web-services/meteo-data/merra}, [Online].
  Accessed: 2017-12-01.

\bibitem{fazlyab2018analysis}
M.~Fazlyab, A.~Ribeiro, M.~Morari, and V.~Preciado, ``Analysis of optimization
  algorithms via integral quadratic constraints: Nonstrongly convex problems,''
  \emph{SIAM J. on Optimization}, vol.~28, no.~3, pp. 2654--2689, 2018.

\bibitem{simonetto2017timevaropt}
A.~{Simonetto} and E.~{Dall’Anese}, ``Prediction-correction algorithms for
  time-varying constrained optimization,'' \emph{{IEEE} Trans. Signal
  Process.}, vol.~65, no.~20, pp. 5481--5494, Oct. 2017.

\bibitem{tarn1976observers}
{Tzyh-Jong Tarn} and Y.~{Rasis}, ``Observers for nonlinear stochastic
  systems,'' \emph{{IEEE} Trans. Autom. Control}, vol.~21, no.~4, pp. 441--448,
  Aug. 1976.

\bibitem{aastrom2012introduction}
K.~J. {\accent23A}str{\"o}m, \emph{Introduction to stochastic control
  theory}.\hskip 1em plus 0.5em minus 0.4em\relax Courier Corporation, 2012.

\bibitem{Reif1999kalmanstab}
K.~{Reif}, S.~{Gunther}, E.~{Yaz}, and R.~{Unbehauen}, ``Stochastic stability
  of the discrete-time extended kalman filter,'' \emph{{IEEE} Trans. Autom.
  Control}, vol.~44, no.~4, pp. 714--728, Apr. 1999.

\end{thebibliography}

\appendices

\section{Proof of Theorem \ref{thm:conv}}\label{app:thmproof}

\begin{mdframed}[hidealllines=true, backgroundcolor=white, 
innerleftmargin=3pt, innerrightmargin=3pt, leftmargin=-3pt, rightmargin=-3pt]

For the proof, we first write the whole closed-loop system in Figure~\ref{fig:bldiag} connecting together all the parts in \eqref{eq:meas}, \eqref{eq:dyneq}, \eqref{eq:kalmanest}, \eqref{eq:projgrad}. Denoting the estimation error as $e_{(t)}=\hat{V}_{(t)}-V_{(t)}$, we define the state $x_{(t)}$, output $z_{(t)}$, and disturbance $\omega_{(t)}$ as

\begin{equation}\label{eq:vari}\arraycolsep=1pt
\begin{array}{c}
x_{(t)} = \left[\begin{array}{c} V_{(t)} \\ e_{(t)} \\ S_{c,(t)} \end{array}\right], 
z_{(t)} = \left[\begin{array}{c} \hat{V}_{(t)}\\ S_{c,(t)} \end{array}\right], \;
\omega_{(t)} = \left[\begin{array}{c} \omega_{l,(t)}\\ \omega_{y,(t)} \end{array}\right]
\end{array}
\end{equation} 

Next we define the control input $u_{(t+1)} = \phi_{\epsilon} (z_{(t)})$, depending on the output $z_{(t)}$ through the nonlinear feedback $\phi_{\epsilon}(\cdot)$. This operator $\phi_{\epsilon}(\cdot)$ is the generalized gradient mapping \cite{fazlyab2018analysis} adapted to the projected gradient descent in \eqref{eq:projgrad}, \r{and it is a mere reformulation of \eqref{eq:projgrad}}:

\begin{equation}
\phi_{\epsilon}(z) = \frac{1}{\epsilon} \big(S_c - \Pi_{\mathcal{F}} \big[ S_{c} - \epsilon \big(\nabla_{S_c} f(S_{c}) + B_c^T \nabla_V g(\hat{V}) \big) \big]\big)
\end{equation}

As a result, we get the nonlinear stochastic closed-loop interconnected system represented in Figure~\ref{fig:bldiag} and expressed as
\begin{equation}\label{eq:intsys}\arraycolsep=1pt\begin{array}{rl}
x_{(t+1)} =  &
\left[\begin{array}{ccc} I_d & 0 & 0 \\ 0 & I_d-K_{(t+1)}H & 0 \\ 0 & 0 & I_d \end{array}\right] x_{(t)} +
\left[\begin{array}{cc} -\epsilon B_l  \\ 0  \\ -\epsilon I_d \end{array}\right] u_{(t+1)} \\[0.5cm]
& +
\left[\begin{array}{cc} B_l & 0 \\ -(I_d-K_{(t+1)}H)B_l & K_{(t+1)}  \\ 0 & 0 \end{array}\right] \omega_{(t+1)} \\[0.5cm]
z_{(t)} = & \left[\begin{array}{ccc} I_d & I_d & 0 \\ 0 & 0 & I_d \end{array}\right] x_{(t)} \\
u_{(t+1)} = & \phi_{\epsilon} (z_{(t)}).
\end{array}
\end{equation}
\end{mdframed}

Considering a perfect estimation $e_{(t)}^*=0$, and using the instantaneous and expected global optimal values defined before Theorem \ref{thm:conv}, $S_{c,(t)}^*$ and $S_{c}^{\mathbb{E}*}$ respectively, we can define the desired points $x_{(t)}^* = [(V_{(t)}^*)^T,0,(S_{c,(t)}^*)^T]^T$ and $x^{\mathbb{E}*} = [(V^{\mathbb{E}*})^T,0,(S_{c}^{\mathbb{E}*})^T]^T$. Due to optimality, these points are instantaneous fixed points of the projected gradient descent. Then, at these points $\phi_{\epsilon}(z_{(t)}^*)=0, \phi_{\epsilon}(z^{\mathbb{E}*})=0$, and thus $x^{\mathbb{E}*}$ is an equilibrium point, and $x_{(t)}^*$ are instantaneous equilibrium points at each time $(t)$. Next, we analyse the stability of these points to prove the convergence of \eqref{eq:kalmanest} and \eqref{eq:projgrad}.

\subsection{Proof of \eqref{eq:convboundedmsq}}

We start formulating some results:
\r{
\begin{lemma}\label{lemma:LipConst}
The objective as a function of $S_c$: $\tilde{f}(S_c)=f(S_c)+g(V(S_c))=f(S_c)+g(V_0+B_cS_c+B_lS_l)$ is $\eta$-strongly convex and has a Lipschitz continuous gradient with parameter $L=L_f+\normsz{B_c}^2 L_g$ \r{(see Assumption \ref{ass:thm} for $L_f,L_g,\eta$).}.
\end{lemma}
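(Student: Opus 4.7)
The plan is to compute $\nabla \tilde{f}(S_c)$ explicitly using the chain rule and then verify the two required properties term-by-term. Since $V(S_c)=V_0+B_cS_c+B_lS_l$ is affine in $S_c$, we have $\nabla \tilde{f}(S_c)=\nabla_{S_c}f(S_c)+B_c^T\nabla_V g(V(S_c))$, which is exactly the gradient expression already used in \eqref{eq:projgrad0}.

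For the Lipschitz bound, I would take two points $S_{c,1},S_{c,2}$, subtract the corresponding gradients, and apply the triangle inequality to split the difference into an $f$-term and a $g$-term. The $f$-term gives $L_f\normsz{S_{c,1}-S_{c,2}}$ directly from Assumption \ref{ass:thm}.1. For the $g$-term, I would use $\normsz{B_c^T(\nabla_V g(V_1)-\nabla_V g(V_2))}\leq\normsz{B_c}L_g\normsz{V_1-V_2}$ and then use affineness $V_1-V_2=B_c(S_{c,1}-S_{c,2})$ to pick up another factor $\normsz{B_c}$, yielding the combined constant $L=L_f+\normsz{B_c}^2 L_g$.

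For strong convexity, the key observation is that the composition $g\circ V$ of a convex function with an affine map is convex, so adding it to $f$ preserves the $\eta$-strong convexity of $f$. Concretely, I would verify the defining inequality $(\nabla \tilde{f}(S_{c,1})-\nabla \tilde{f}(S_{c,2}))^T(S_{c,1}-S_{c,2})\geq \eta\normsz{S_{c,1}-S_{c,2}}^2$ by splitting the inner product into the $f$-contribution (bounded below by $\eta\normsz{S_{c,1}-S_{c,2}}^2$ using Assumption \ref{ass:thm}.3) and the $g$-contribution, which equals $(\nabla_V g(V_1)-\nabla_V g(V_2))^T B_c(S_{c,1}-S_{c,2})=(\nabla_V g(V_1)-\nabla_V g(V_2))^T(V_1-V_2)\geq 0$ by convexity of $g$ (Assumption \ref{ass:thm}.2).

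There is no real obstacle here; the statement is a standard composition/sum result, and the only thing to be careful about is keeping track of the two $\normsz{B_c}$ factors picked up from pre- and post-multiplying by $B_c^T$ and $B_c$, which together give the $\normsz{B_c}^2$ appearing in $L$. No strong convexity of $g$ is needed, consistent with Assumption \ref{ass:thm} requiring only convexity of $g$.
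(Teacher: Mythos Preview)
Your proposal is correct and matches the paper's proof essentially line for line: the paper also computes $\nabla_{S_c}\tilde f = \nabla_{S_c}f + B_c^T\nabla_V g$, applies the triangle inequality plus the two Lipschitz assumptions to obtain $L = L_f + \normsz{B_c}^2 L_g$, and for strong convexity invokes that the affine precomposition $g(V_0+B_cS_c+B_lS_l)$ is convex so $\tilde f$ inherits the $\eta$-strong convexity of $f$. Your explicit inner-product verification of the strong-convexity inequality is slightly more detailed than the paper's one-line argument, but the approach is the same.
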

\begin{proof}
\begin{enumerate}[leftmargin=*]
\item Strongly convex: Since $g(V)$ is convex on $V$, $g(V_0+B_cS_c+B_lS_l)$ is convex on $S_c$. Then, since $f(S_c)$ is $\eta$-strongly convex, so is $\tilde{f}(S_c)$. 
\item Lipschitz: We have $\nabla_{S_c} \tilde{f}(S_c) = \nabla_{S_c} f(S_c) + B_c^T \nabla_V g(V_0 + B_cS_c + B_lS_l)$, then
\begin{equation}\begin{array}{l}
\normsz{\nabla_{S_c} \tilde{f}(S_{c,1}) - \nabla_{S_c} \tilde{f}(S_{c,2})} \\
\leq L_f\normsz{S_{c,1}-S_{c,2}}+\normsz{B_c} L_g \normsz{B_c(S_{c,1}-S_{c,2})} \\
\leq (L_f+\normsz{B_c}^2 L_g) \normsz{S_{c,1}-S_{c,2}}
\end{array}
\end{equation}
\end{enumerate}
\end{proof}
}

\begin{lemma}\label{lemma:lipOp}
There exists a Lipschitz constant $L_{S,\text{opt}}$, such that the operators mapping the subsequent values $S_{l,(t)},S_{l,(t+1)}$, to the respective optimal solutions $S_{c,(t)}^*,S_{c,(t+1)}^*$ of \eqref{eq:opt} satisfy: 
\begin{equation}\arraycolsep=1pt\begin{array}{rl}
\normsz{S_{c,(t+1)}^*-S_{c,(t)}^*} 
& \leq L_{S,\text{opt}} \normsz{B_l(S_{l,(t+1)}-S_{l,(t)})} \\
& = L_{S,\text{opt}} L_{S,\text{opt}} \normsz{\omega_{l,(t+1)}}  
\end{array}
\end{equation}
\end{lemma}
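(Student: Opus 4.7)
The plan is to view $S_{c,(t)}^*$ as the solution of a parametric strongly-convex program with parameter $S_{l,(t)}$, and apply a standard sensitivity argument based on the first-order optimality (variational inequality) condition.

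First, I would invoke Lemma~\ref{lemma:LipConst} to reduce the problem \eqref{eq:opt} to the single-variable form
\[
S_{c,(t)}^* = \arg\min_{S_c \in \mathcal{F}} \tilde{f}(S_c, S_{l,(t)}), \quad \tilde{f}(S_c,S_l) = f(S_c) + g(V_0 + B_c S_c + B_l S_l),
\]
which by Lemma~\ref{lemma:LipConst} is $\eta$-strongly convex in $S_c$ for every fixed $S_l$. Consequently $S_{c,(t)}^*$ is uniquely defined and characterized by the variational inequality
\[
\langle \nabla_{S_c}\tilde{f}(S_{c,(t)}^*, S_{l,(t)}), S_c - S_{c,(t)}^* \rangle \geq 0 \quad \forall S_c \in \mathcal{F}.
\]

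Next, I would write this inequality at times $(t)$ and $(t+1)$ with test points $S_c = S_{c,(t+1)}^*$ and $S_c = S_{c,(t)}^*$ respectively, add them, and rearrange to isolate $\langle \nabla_{S_c}\tilde{f}(S_{c,(t)}^*,S_{l,(t+1)}) - \nabla_{S_c}\tilde{f}(S_{c,(t+1)}^*,S_{l,(t+1)}),\, S_{c,(t)}^* - S_{c,(t+1)}^* \rangle$ on one side. Strong convexity of $\tilde{f}(\cdot,S_{l,(t+1)})$ then lower-bounds this quantity by $\eta\|S_{c,(t+1)}^*-S_{c,(t)}^*\|_2^2$, and Cauchy–Schwarz upper-bounds the other side by $\|\nabla_{S_c}\tilde{f}(S_{c,(t)}^*,S_{l,(t+1)}) - \nabla_{S_c}\tilde{f}(S_{c,(t)}^*,S_{l,(t)})\|_2 \cdot \|S_{c,(t+1)}^*-S_{c,(t)}^*\|_2$.

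Finally, since $\nabla_{S_c}\tilde{f}(S_c,S_l) = \nabla_{S_c}f(S_c) + B_c^T \nabla_V g(V_0 + B_cS_c + B_lS_l)$, the first term cancels when $S_c$ is held fixed and only $S_l$ varies, so
\[
\|\nabla_{S_c}\tilde{f}(S_{c,(t)}^*,S_{l,(t+1)}) - \nabla_{S_c}\tilde{f}(S_{c,(t)}^*,S_{l,(t)})\|_2 \leq \|B_c\|\, L_g\, \|B_l(S_{l,(t+1)}-S_{l,(t)})\|_2,
\]
by the Lipschitz continuity of $\nabla_V g$ from Assumption~\ref{ass:thm}. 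Dividing through by $\eta\|S_{c,(t+1)}^*-S_{c,(t)}^*\|_2$ yields the claim with $L_{S,\text{opt}} = \|B_c\| L_g / \eta$.

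The only subtle point is ensuring the sensitivity argument is valid over the hard-constrained set $\mathcal{F}$ rather than over all of $\mathbb{R}^{2N_c}$; this is handled cleanly by using the variational inequality (which holds for any convex $\mathcal{F}$) rather than setting a gradient to zero. I do not expect any real obstacle beyond bookkeeping, since strong convexity of $\tilde{f}$ in $S_c$ and Lipschitz continuity of $\nabla_V g$ are already guaranteed by Assumption~\ref{ass:thm} and Lemma~\ref{lemma:LipConst}.
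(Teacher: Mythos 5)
Your proof is correct, but it follows a genuinely different route from the paper. The paper characterizes $S_{c,(t)}^*$ and $S_{c,(t+1)}^*$ as fixed points of the projected gradient iteration \eqref{eq:projgrad} with step size $\epsilon$, removes the projection by nonexpansiveness, uses the contraction factor $r(\epsilon)=\sqrt{1-2\eta\epsilon+\epsilon^2L^2}$ from Lemma~\ref{lemma:LipConst}, iterates the resulting inequality $k$ times and lets $k\rightarrow\infty$, which yields the constant $\frac{\epsilon}{1-r(\epsilon)}\normsz{B_c}L_g$ and then requires an extra minimization over admissible step sizes $\epsilon$ to define $L_{S,\text{opt}}$. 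You instead treat \eqref{eq:opt} as a parametric strongly convex program and apply the classical sensitivity argument: write the variational inequality at both time steps with swapped test points, add them, split the gradient difference into a term where only $S_c$ varies (lower-bounded via $\eta$-strong monotonicity from Lemma~\ref{lemma:LipConst}) and a term where only $S_l$ varies (upper-bounded via Cauchy--Schwarz and the $L_g$-Lipschitz gradient of $g$), and divide. Both arguments are sound; yours is more direct, avoids the artificial dependence on the algorithmic step size, handles the constraint set $\mathcal{F}$ cleanly through the variational inequality, and produces the explicit constant $L_{S,\text{opt}}=\normsz{B_c}L_g/\eta$, which is essentially the limiting value of the paper's $\frac{\epsilon}{1-r(\epsilon)}\normsz{B_c}L_g$ as $\epsilon\rightarrow 0$ (the paper's stated limit $\frac{1}{2\eta}$ is actually $\frac{1}{\eta}$, a minor slip that your derivation sidesteps entirely). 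The paper's fixed-point route has the mild advantage of reusing exactly the same contraction machinery that appears later in \eqref{eq:optpastconv} and \eqref{eq:lemma}, so nothing new is introduced, but as a standalone proof of the Lipschitz dependence of the optimizer on the load parameter, your argument is tighter and more standard.
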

\begin{proof}
The proof is in Appendix \ref{app:lipOp}.
\end{proof}

\r{This lemma goes in accordance with the statement in \cite[(14)]{simonetto2017timevaropt}, limiting the change of optimal solutions to the size of the time step, or in our case, the change in the load inputs in this time step.}

\begin{mdframed}[hidealllines=true,backgroundcolor=white, 
innerleftmargin=3pt,innerrightmargin=3pt,leftmargin=-3pt,rightmargin=-3pt]


\begin{thm}\label{thm:lyapconv}\cite[Theorem~2]{tarn1976observers}
If there is a stochastic process $z_{(t)}$, a function $\mathcal{V}_{(t)}(\cdot)$ and real numbers $\check{\nu},\hat{\nu},\zeta>0$ and $0<\delta\leq 1$ such that
\begin{subequations}\label{eq:lyapcond}
\begin{align}\arraycolsep=1pt
&\begin{array}{l}
\check{\nu} \normsz{z_{(t)}}^2 \leq \mathcal{V}_{(t)}(z_{(t)}) \leq \hat{\nu} \normsz{z_{(t)}}^2
\end{array}\label{eq:lyapcond1}
\\
&\begin{array}{l}
\expect{ \mathcal{V}_{(t+1)}(z_{(t+1)}) | z_{(t)}} \leq \zeta + (1-\delta) \mathcal{V}_{(t)}(z_{(t)})
\end{array}\label{eq:lyapcond2}
\end{align}
\end{subequations}
then 
\begin{equation*}\arraycolsep=1pt\begin{array}{rl}
\expect{\normsz{z_{(t)}}^2} & \leq \frac{\hat{\nu}}{\check{\nu}} \expect{\normsz{z_{(0)}}^2} (1-\delta)^t + \frac{\zeta}{\check{\nu}} \sum_{i=1}^{t-1} (1-\delta)^i \\[0.2cm]
& \leq \frac{\hat{\nu}}{\check{\nu}} \expect{\normsz{z_{(0)}}^2} (1-\delta)^t + \frac{\zeta}{\check{\nu}\delta} ,
\end{array}
\end{equation*}
\end{thm}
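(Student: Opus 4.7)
The plan is to iterate the one-step drift inequality \eqref{eq:lyapcond2}, convert it into a scalar recursion for the expected Lyapunov value, and then sandwich with the quadratic two-sided bound \eqref{eq:lyapcond1}. First I would take total expectations of \eqref{eq:lyapcond2} and invoke the tower property $\expect{\mathcal{V}_{(t+1)}(z_{(t+1)})} = \expect{\expect{\mathcal{V}_{(t+1)}(z_{(t+1)})|z_{(t)}}}$, which yields the deterministic recursion $a_{t+1} \leq \zeta + (1-\delta) a_t$ where $a_t := \expect{\mathcal{V}_{(t)}(z_{(t)})}$.

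Next I would unroll this linear scalar recursion by induction on $t$. Applying it $t$ times and summing the geometric series produced by the $\zeta$ terms gives
\begin{equation*}
a_t \;\leq\; (1-\delta)^t a_0 \;+\; \zeta \sum_{i=0}^{t-1} (1-\delta)^i.
\end{equation*}
This unrolling is the workhorse step; it uses no probabilistic content beyond what was already extracted from \eqref{eq:lyapcond2} via the tower property, and $0<\delta\leq 1$ ensures that the coefficients $(1-\delta)^i$ are nonnegative so the inequality is preserved when summing.

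Then I would invoke both sides of the quadratic sandwich \eqref{eq:lyapcond1}. The lower bound yields $\check{\nu}\expect{\normsz{z_{(t)}}^2} \leq a_t$ for every $t$, while the upper bound at $t=0$ gives $a_0 \leq \hat{\nu}\expect{\normsz{z_{(0)}}^2}$. Dividing the unrolled recursion through by $\check{\nu}$ and substituting these two bounds produces exactly the claimed inequality. The second (looser) form then follows from bounding the finite geometric sum by its infinite tail $\sum_{i=0}^{\infty}(1-\delta)^i = 1/\delta$, which is finite because $0<\delta\leq 1$.

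There is no real technical obstacle here; the statement is essentially a deterministic comparison lemma for scalar recursions, dressed up by one application of the tower property to strip away the conditioning on $z_{(t)}$. The only point that deserves a brief justification is that the conditional-expectation form in \eqref{eq:lyapcond2} suffices (as opposed to needing a filtration argument): since the bound holds pointwise for every realization of $z_{(t)}$, integrating against its law gives the unconditional recursion without any additional Markov or independence hypotheses on the driving noise.
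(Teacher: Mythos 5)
Your proof is correct, but note that the paper does not prove this statement at all: it is imported verbatim as \cite[Theorem~2]{tarn1976observers}, so there is no in-paper argument to compare against. Your route --- tower property to get the scalar recursion $a_{t+1}\leq \zeta+(1-\delta)a_t$, unroll, then sandwich with \eqref{eq:lyapcond1} --- is the standard and essentially the only proof of this comparison lemma. One small observation: your unrolling correctly produces $\zeta\sum_{i=0}^{t-1}(1-\delta)^i$, whereas the statement writes $\sum_{i=1}^{t-1}$; the missing $i=0$ term appears to be a typo in the paper (or an artifact of how the cited source indexes the recursion), and it is immaterial since both sums are dominated by $1/\delta$, which is all the paper ever uses downstream in \eqref{eq:expboundestim}.
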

\end{mdframed}

\r{
According to Assumption \ref{ass:thm}, $H$ has full-column rank, and $\Sigma_y,\Sigma_l$ full rank. Then the system is uniformly completely observable \cite[Chapter~7]{jazwinski1970stochfilt}. Thus, the covariance matrix $P_{(t)}$ is also upper for all $(t)$ \cite[Lemma~7.1]{jazwinski1970stochfilt}. So apart from the lower bound $\check{\sigma}$ in Assumption \ref{ass:thm}, there exist an upper bound $\hat{\sigma} >0$ on $P_{(t)}$, depending on $\Sigma_l,\Sigma_y,H,B_c,B_l$, so that: $\check{\sigma} I_d \preceq P_{(t)} \preceq	\hat{\sigma} I_d, \; \forall t$. As a result, $K_{(t)}$ is also bounded.
}

Despite the non-linear feedback, since we have a linear open-loop system \eqref{eq:dyneq} and estimator \eqref{eq:kalmanest}, the separation principle \cite{aastrom2012introduction} holds and we can analyse the stability and convergence of the estimator and controller separately. \r{This separation can be directly observed in \eqref{eq:intsys}, since there is a $0$ multiplying the input $u(t+1)$ for the index corresponding to the estimation error $e(t)$.} Then, using the \textit{Lyapunov} function $\mathcal{V}_{(t)}(e_{(t)}) = e_{(t)}^T P_{(t)}^{-1} e_{(t)}$, we can prove the results of Theorem \ref{thm:conv} for the estimation error $e(t)$: The first condition \eqref{eq:lyapcond1} is satisfied using the bounds of $P_{(t)}$, and \eqref{eq:lyapcond2} can be proven using a similar reasoning as in \cite{Reif1999kalmanstab}:

\begin{mdframed}[hidealllines=true,backgroundcolor=white, 
innerleftmargin=3pt,innerrightmargin=3pt,leftmargin=-3pt,rightmargin=-3pt]

\begin{equation}\label{eq:lyapestimerror}\arraycolsep=1pt\begin{array}{rl}
& e_{(t+1)}^T P_{(t+1)}^{-1} e_{(t+1)} \vspace{-0.1cm}\\

\stackrel{a)}{=} & e_{(t)}^T (I_d-K_{(t+1)}H)^T P_{(t+1)}^{-1} (I_d-K_{(t+1)}H) e_{(t)}\\
	& + \omega_{y,(t+1)}^T K_{(t+1)}^T P_{(t+1)}^{-1} K_{(t+1)} \omega_{y,(t+1)} \\
	& + \omega_{l,(t+1)}^T (I_d-K_{(t+1)}H)^T P_{(t+1)}^{-1} (I_d-K_{(t+1)}H) \omega_{l,(t+1)} \\
	& + (\text{crossed terms on } e_{(t)},\omega_{y,(t)},\omega_{l,(t)}) \vspace{-0.1cm}\\
	
\stackrel{b)}{=} & e_{(t)}^T (P_{(t)}+\Sigma_l)^{-1} (I_d-K_{(t+1)}H) e_{(t)}\\
	& + \omega_{y,(t+1)}^T K_{(t+1)}^T P_{(t+1)}^{-1} K_{(t+1)} \omega_{y,(t+1)} \\
	& + \omega_{l,(t+1)}^T (I_d-K_{(t+1)}H)^T P_{(t+1)}^{-1} (I_d-K_{(t+1)}H) \omega_{l,(t+1)} \\
	& + (\text{crossed terms on } e_{(t)},\omega_{y,(t)},\omega_{l,(t)}) \vspace{-0.1cm}\\	

\stackrel{c)}{\leq} & e_{(t)}^T P_{(t)}^{-1} e_{(t)}\\
	& - e_{(t)}^T H^T(H(P_{(t)}+\Sigma_l)H^T+\Sigma_y)^{-1}H e_{(t)} \\
	& + \omega_{y,(t+1)}^T K_{(t+1)}^T P_{(t+1)}^{-1} K_{(t+1)} \omega_{y,(t+1)} \\
	& + \omega_{l,(t+1)}^T (I_d-K_{(t+1)}H)^T P_{(t+1)}^{-1} (I_d-K_{(t+1)}H) \omega_{l,(t+1)} \\
	& + (\text{crossed terms on } e_{(t)},\omega_{y,(t)},\omega_{l,(t)}) \vspace{-0.1cm}\\	

\stackrel{d)}{<} & e_{(t)}^T P_{(t)}^{-1} e_{(t)} - \psi \normsz{e_{(t)}}^2 \\
	& + \frac{1}{\check{\sigma}} (\normsz{K_{(t+1)}\omega_{y,(t)}}^2+\normsz{(I_d-K_{(t+1)}H)\omega_{l,(t)}}^2) \\
	& + (\text{crossed terms on } e_{(t)},\omega_{y,(t)},\omega_{l,(t)}) \vspace{-0.1cm}\\	

\stackrel{e)}{=} & (1-\psi\check{\sigma} ) e_{(t)}^T P_{(t)}^{-1} e_{(t)}\\
	& + \frac{1}{\check{\sigma}} (\normsz{K_{(t+1)}\omega_{y,(t)}}^2+\normsz{(I_d-K_{(t+1)}H)\omega_{l,(t)}}^2) \\
	& + (\text{crossed terms on } e_{(t)},\omega_{y,(t)},\omega_{l,(t)}) \\		

\end{array}
\end{equation}
where in $a)$ we have expanded $e_{(t+1)}$ using \eqref{eq:intsys}; in $b)$ we have used the expression of $P_{(t+1)}$ in \eqref{eq:kalmanest}; in $c)$ we use the expression of $K_{(t+1)}$ in \eqref{eq:kalmanest}, and that $\Sigma_l \succeq 0$, and thus $P_{(t)}^{-1} \succeq (P_{(t)}+\Sigma_l)^{-1}$; in $d)$ we use the upper bound of $P_{(t+1)}^{-1}$, and that since $H(P_{(t)}+\Sigma_l)H^T \succeq 0$, $ \Sigma_y \succ 0$ and $H$ is full-column rank, then $H^T(H(P_{(t)}+\Sigma_l)H^T+\Sigma_y)^{-1}H \succ 0$, so there exists $\psi>0$ such that $d)$ is true with:
\begin{equation}\label{eq:psi}\arraycolsep=1pt\begin{array}{rl}
\psi < & \lambda_{\min} \big(H^T(H(P_{(t)}+\Sigma_l)H^T+\Sigma_y)^{-1}H \big) \\ 
\end{array}
\end{equation}
where $\lambda_{\min}(\cdot)$ denote the minimum eigenvalue; and in $e)$ we use the lower bound $\check{\sigma}$ of $P_{(t)}$. Note \r{that \eqref{eq:psi} provides and upper bound on $\psi$,} so we can choose $\psi$ such that $\psi\check{\sigma} <1$. Taking the expectation $\mathbb{E}[\cdot]$, since $\expect{\omega_{y,(t)}}=\expect{\omega_{l,(t)}}=0$, the crossed terms vanish and we get:
\begin{equation}\label{eq:estimconv}\begin{array}{l}
\expect{e_{(t+1)}^T P_{(t+1)}^{-1} e_{(t+1)} \mid e_{(t)}} \\
\leq (1-\psi\check{\sigma} ) e_{(t)}^T P_{(t)}^{-1} e_{(t)} + \tau (\text{trace}(\Sigma_y)+\text{trace}(\Sigma_l))
\end{array}
\end{equation} 
where we have used $\expect{\normsz{\omega_{(\cdot),(t)}}^2}=\text{trace}(\Sigma_{(\cdot)})$, and $\tau = \frac{1}{\check{\sigma}}\max(\normsz{K_{(t+1)}}^2,\normsz{I_d-K_{(t+1)}H}^2) < \infty$. Then, we can conclude that $e_{(t)}$ is exponentially bounded in mean square and provide the corresponding bound \eqref{eq:convboundedmsq} using Theorem \ref{thm:lyapconv}:
\begin{equation}\label{eq:expboundestim}\begin{array}{l}
\expect{\normsz{e_{(t)}}^2} \\ 
\leq \frac{\hat{\sigma}}{\check{\sigma}}\big(1-\psi\check{\sigma} \big)^{t} \expect{\normsz{e_{(0)}}^2} 
+ \frac{\tau}{\psi} (\text{trace}(\Sigma_y)+\text{trace}(\Sigma_l)) \\
\stackrel{t \rightarrow \infty}{\rightarrow} \frac{\tau}{\psi} (\text{trace}(\Sigma_y)+\text{trace}(\Sigma_l))
\end{array}
\end{equation} 
\end{mdframed}

Next we study at the convergence and stability of $S_{c,(t)}$. Due to Lemma \ref{lemma:lipOp}, we have
\r{
\begin{equation}\label{eq:timeconv}\arraycolsep=1pt\begin{array}{l}
\normsz{S_{c,(t+1)}-S_{c,(t+1)}^*} \\
\leq \normsz{S_{c,(t+1)}-S_{c,(t)}^*} + \normsz{S_{c,(t+1)}^*-S_{c,(t)}^*}  \\ 
\leq \normsz{S_{c,(t+1)}-S_{c,(t)}^*} + L_{S,\text{opt}} \normsz{\omega_{l,(t+1)}} 
\end{array}
\end{equation}
}

Then we have
\begin{equation}\label{eq:optpastconv}\arraycolsep=1pt\begin{array}{rl}
& \normsz{S_{c,(t+1)}-S_{c,(t)}^*} \vspace{-0.1cm} \\

\stackrel{a)}{=} & \normsz{\Pi_{\mathcal{F}} \big[ S_{c,(t)} 
- \epsilon (\nabla_{S_c} f(S_{c,(t)}) + B_c^T \nabla_V g(\hat{V}_{(t)}) ) \big] \\
& - \Pi_{\mathcal{F}} \big[S_{c,(t)}^* - \epsilon (\nabla_{S_c} f(S_{c,(t)}^*) + B_c^T \nabla_V g(V_{(t)}^*) ) \big] } \vspace{-0.15cm} \\

\stackrel{b)}{\leq} & \normsz{ S_{c,(t)} 
- \epsilon (\nabla_{S_c} f(S_{c,(t)}) + B_c^T \nabla_V g(\hat{V}_{(t)}) )  \\
& - S_{c,(t)}^* 
+ \epsilon (\nabla_{S_c} f(S_{c,(t)}^*) + B_c^T \nabla_V g(V_{(t)}^*)) } \vspace{-0.15cm}\\

\stackrel{c)}{\leq} & \normsz{ S_{c,(t)} - S_{c,(t)}^* 
- \epsilon \big(\nabla_{S_c} f(S_{c,(t)}) - \nabla_{S_c} f(S_{c,(t)}^*) \\
& + B_c^T \nabla_V g(V_{(t)}) - B_c^T \nabla_V g(V^*_{(t)}) \big) } \\
& + \normsz{\epsilon B_c^T ( \nabla_V g(\hat{V}_{(t)}) - \nabla_V g(V_{(t)})) } \vspace{-0.1cm}\\

\stackrel{d)}{\leq} & \sqrt{1-2\eta\epsilon+\epsilon^2 L^2} \normsz{ S_{c,(t)} - S_{c,(t)}^*} + \epsilon \normsz{B_c} L_g \normsz{e_{(t)}} \\

\end{array}
\end{equation}
where in $a)$ we use that $S_{c,(t)}^*$ is a fixed point of the projected gradient descent \eqref{eq:projgrad} due to its optimality; in $b)$ we remove the projection; in $c)$ we add and subtract $\epsilon B_c^T \nabla_V g(V_{(t)})$, and then use the triangle inequality; in $d)$, for the first norm we use the strong convexity and the Lipschitz continuity given by Lemma \ref{lemma:LipConst}; and for the second norm the Lipschitz continuity of $\nabla_V g(V)$. Defining $r(\epsilon) = \sqrt{1-2\eta\epsilon+\epsilon^2 L^2}$, we combine \eqref{eq:timeconv} and \eqref{eq:optpastconv}, and take expectations to get:

\begin{equation}\label{eq:expboundoptim}\arraycolsep=1pt 
\begin{array}{l}
\begin{array}{rl}
& \expect{\normsz{S_{c,(t+1)}-S_{c,(t+1)}^*}} \\ 
\leq &  r(\epsilon) \expect{\normsz{S_{c,(t)}-S_{c,(t)}^*}} \\
&  + \epsilon \normsz{B_c} L_g \expect{\normsz{e_{(t)}}} + L_{S,\text{opt}} \expect{\normsz{\omega_{l,(t)}}} \vspace{-0.1cm} \\

\stackrel{a)}{\leq} & r(\epsilon)^{t} \expect{\normsz{S_{c,(0)}-S_{c,(0)}^*}} \\
& + \sum_{k=0}^{t} r(\epsilon)^{(t-k)} \big(L_{S,\text{opt}} \sqrt{ \text{trace}(\Sigma_l)} \\ 
& + \epsilon \normsz{B_c} L_g \sqrt{\expect{\normsz{e_{(k)} }^2}} \big) \vspace{-0.1cm} \\ 


\stackrel{b)}{\leq} & r(\epsilon)^{t} \expect{\normsz{S_{c,(0)}-S_{c,(0)}^*}} \\
& + \sum_{k=0}^{t} r(\epsilon)^{(t-k)} \Big(  L_{S,\text{opt}} \sqrt{\text{trace}(\Sigma_l)} \\
& + \epsilon \normsz{B_c} L_g  \sqrt{ \frac{\hat{\sigma}}{\check{\sigma}}\big(1-\psi\check{\sigma} \big)^{k} \expect{\normsz{e_{(0)}}^2} } \\
& +  \epsilon \normsz{B_c} L_g \sqrt{ \frac{\tau}{\psi} (\text{trace}(\Sigma_y)+\text{trace}(\Sigma_l)) } \Big) \vspace{-0.15cm} \\

\stackrel{c)}{\leq} & r(\epsilon)^{t} \expect{\normsz{S_{c,(0)}-S_{c,(0)}^*}} \\
& + \epsilon \normsz{B_c} L_g \sqrt{\frac{\hat{\sigma}}{\check{\sigma}} \expect{\normsz{e_{(0)}}^2}} t \max\big(r(\epsilon),\sqrt{1-\psi\check{\sigma}}\big)^t \\
& + \frac{1}{1-r(\epsilon)} \Big(L_{S,\text{opt}} \sqrt{\text{trace}(\Sigma_l)} \\
& + \epsilon \normsz{B_c} L_g \sqrt{\frac{\tau}{\psi} (\text{trace}(\Sigma_y)+\text{trace}(\Sigma_l))} \Big) \\

\end{array} \\

\stackrel{t \rightarrow \infty}{\rightarrow}  \frac{1}{1-r(\epsilon)} \Big(L_{S,\text{opt}} \sqrt{\text{trace}(\Sigma_l)} \\
\hspace{0.8cm} + \epsilon \normsz{B_c} L_g \sqrt{\frac{\tau}{\psi} (\text{trace}(\Sigma_y)+\text{trace}(\Sigma_l))} \Big)
\end{array}
\end{equation} 
where in $a)$ we apply the inequality in \eqref{eq:optpastconv} $(t)$-times, and use the Jensen inequality for the concave function $\sqrt{(\cdot)}$ on $\expect{\normsz{\omega_{l,(t)}}}$ and $\expect{\normsz{e_{(k)}}}$: $\expect{\normsz{(\cdot)}} \leq \sqrt{\expect{\normsz{(\cdot)}^2}}$; in $b)$ we substitute $\expect{\normsz{e_{(k)}}^2} $ using \eqref{eq:expboundestim} and use that $\sqrt{(\cdot)}$ is subadditive: $\sqrt{x+y}\leq \sqrt{x}+\sqrt{y}, \; \forall x,y >0$; and in $c)$ \r{and the limit, we sum over $k$ and use the following: $r(\epsilon)<1$ if choosing a step size $\epsilon < \frac{2\eta}{(L_f+\normsz{B_c}^2 L_g)^2+\normsz{B_c}^2 L_g^2}$ according to Theorem \ref{thm:conv}; as explained after \eqref{eq:psi}, $\psi$ can be chosen so that $1-\psi\check{\sigma}<1$; $r(\epsilon)>1-\frac{\eta^2}{L^2}>0$, $1-\psi\check{\sigma}>0$; then $0<r(\epsilon)^{t-k}\sqrt{1-\psi\check{\sigma}}^k \leq \max\big(r(\epsilon),\sqrt{1-\psi\check{\sigma}}\big)^t$, since $t \geq k \geq 0$, with $\max\big(r(\epsilon),\sqrt{1-\psi\check{\sigma}}\big)<1$.}

\subsection{Proof of \eqref{eq:convunbiassol}}

Taking expectations on the interconnected system \eqref{eq:intsys}, we can eliminate the disturbance noise $\omega_{(t)}$. Without it and using the same procedure as before for the estimation part, we get
\begin{equation}\label{eq:biasconv}\begin{array}{l}
\expect{e_{(t+1)}\mid e_{(t)}}^T P_{(t+1)}^{-1} \expect{e_{(t+1)} \mid e_{(t)}} \\
\leq (1-\psi\check{\sigma} ) e_{(t)}^T P_{(t)}^{-1} e_{(t)} 
\end{array}
\end{equation} 
so that instead of \eqref{eq:estimconv} we have
\begin{equation}\label{eq:biasestimconv}\arraycolsep=1pt\begin{array}{l}
\normsz{\expect{e_{(t)}}}^2 
\leq \frac{\hat{\sigma}}{\check{\sigma}}\big(1-\psi\check{\sigma} \big)^{t} \normsz{\expect{e_{(0)}}}^2
\stackrel{t \rightarrow \infty}{\rightarrow} 0,
\end{array}
\end{equation} 
and for the optimality part we have
\begin{equation}\label{eq:biasoptimconv}\arraycolsep=1pt
\begin{array}{l}
\begin{array}{rl}
& \normsz{S_{c,(t+1)} - S_{c}^{\mathbb{E}*}} \vspace{-0.1cm}\\

\stackrel{a)}{\leq} & r(\epsilon) \normsz{ S_{c,(t)} - S_{c}^{\mathbb{E}*}} + \epsilon \normsz{B_c} L_g \normsz{\expect{e_{(t)}}} \\

\stackrel{b)}{\leq} & r(\epsilon) \normsz{ S_{c,(t)} - S_{c}^{\mathbb{E}*}} + \epsilon \normsz{B_c} L_g \sqrt{\frac{\hat{\sigma}}{\check{\sigma}}\normsz{\expect{e_{(0)}}}^2} \sqrt{1-\psi\check{\sigma}}^{t} \vspace{-0.1cm}\\

\stackrel{c)}{\leq} & r(\epsilon)^t \normsz{ S_{c,(0)} - S_{c}^{\mathbb{E}*}} \\ 
& + \epsilon \normsz{B_c} L_g \sqrt{ \frac{\hat{\sigma}}{\check{\sigma}} \normsz{\expect{e_{(0)}}}^2} t \max\big(r(\epsilon),\sqrt{1-\psi\check{\sigma}}\big)^t  \\

\end{array} \\

\stackrel{t \rightarrow \infty}{\rightarrow} 0

\end{array}
\end{equation}
where in $a)$ we use same arguments as in \eqref{eq:optpastconv}, but adding and subtracting $\epsilon B_c^T \nabla_V g(\expect{V_{(t)}})$; in $b)$ we substitute $\normsz{\expect{e_{(t)}}}^2$ using \eqref{eq:biasestimconv}; in $c)$ we apply the previous inequality $(t)$-times and sum the terms as in \eqref{eq:expboundoptim}

\subsection{Proof of Lemma \ref{lemma:lipOp}}\label{app:lipOp}

\begin{equation}\label{eq:lemma}\arraycolsep=1pt
\begin{array}{l}
\begin{array}{rl}
& \normsz{S_{c,(t+1)}^*-S_{c,(t)}^*} \vspace{-0.1cm}\\

\stackrel{a)}{=} & \normsz{\Pi_{\mathcal{F}} \big[ S_{c,(t+1)}^* 
- \epsilon \big(\nabla_{S_c} f(S_{c,(t+1)}^*) + B_c^T \nabla_V g(V_{(t+1)}^*) \big) \big] \\
& - \Pi_{\mathcal{F}} \big[S_{c,(t)}^* - \epsilon \big(\nabla_{S_c} f(S_{c,(t)}^*) + B_c^T \nabla_V g(V_{(t)}^*) \big) \big] } \vspace{-0.15cm}\\

\stackrel{b)}{\leq} & \normsz{S_{c,(t+1)}^* 
- \epsilon \big(\nabla_{S_c} f(S_{c,(t+1)}^*) + B_c^T \nabla_V g(V_{(t+1)}^*) \big) \\
& - S_{c,(t)}^* + \epsilon \big(\nabla_{S_c} f(S_{c,(t)}^*) + B_c^T \nabla_V g(V_{(t)}^*) \big) } \vspace{-0.15cm} \\

\stackrel{c)}{\leq} & \normsz{S_{c,(t+1)}^* - S_{c,(t)}^*
- \epsilon \big(\nabla_{S_c} f(S_{c,(t+1)}^*) - \nabla_{S_c} f(S_{c,(t)}^*) \\
& + B_c^T \nabla_V g(B_cS_{c,(t+1)}^* + B_lS_{l,(t)}) \\ 
& - B_c^T \nabla_V g(B_cS_{c,(t)}^* + B_lS_{l,(t)}) \big) } \\
& + \normsz{\epsilon B_c^T (\nabla_V g(B_cS_{c,(t+1)}^* + B_lS_{l,(t+1)}) \\
& - \nabla_V g(B_cS_{c,(t+1)}^* + B_lS_{l,(t)}))} \vspace{-0.20cm} \\

\stackrel{d)}{\leq} & r(\epsilon) \normsz{S_{c,(t+1)}^* - S_{c,(t)}^*} \\
& + \epsilon \normsz{B_c} L_g \normsz{B_l(S_{l,(t+1)}-S_{l,(t)})} \vspace{-0.15cm}\\

\stackrel{e)}{\leq} & r(\epsilon)^k \normsz{S_{c,(t+1)}^*-S_{c,(t)}^*} \\
& + \frac{1-r(\epsilon)^{k}}{1-r(\epsilon)} \epsilon \normsz{B_c} L_g  \normsz{B_l(S_{l,(t+1)}-S_{l,(t)})} \\
\end{array} \\

\stackrel{k \rightarrow \infty}{\rightarrow} \frac{\epsilon}{1-r(\epsilon)}  \normsz{B_c} L_g \normsz{B_l(S_{l,(t+1)}-S_{l,(t)})} \\

\end{array}
\end{equation}
where in $a)$ we use that both solutions are fixed points of the projected gradient descent \eqref{eq:projgrad} due to their optimality; in $b)$ we remove the projection; in $c)$ we use \eqref{eq:PFapprox}: $V_{(t)}^*=B_c S_{c,(t)}^* + B_l S_{l,(t)}$, we add and subtract $\epsilon B_c^T \nabla_V g(B_cS_{c,(t+1)}^* + B_lS_{l,(t)}$ and use the triangle inequality; in $d)$ we use the strong convexity and the gradient Lipschitz continuity as in \eqref{eq:optpastconv} defining $r(\epsilon) = \sqrt{1-2\eta\epsilon+\epsilon^2 L^2}$; in $e)$ we apply the previous inequality $k$ times; \r{and in the limit we use that $\abs{r(\epsilon)}<1$ as explained after \eqref{eq:expboundoptim}}. 

\r{Since \eqref{eq:lemma} is true for all $\epsilon$ such that $r(\epsilon)<1$, we can choose $L_{S,\text{opt}}=\min_{\{\epsilon \mid r(\epsilon)<1\}} \frac{\epsilon}{1-r(\epsilon)}$. However, note that since $ \lim_{\epsilon \rightarrow 0} \frac{\epsilon}{1-r(\epsilon)} =\frac{1}{2\eta}$, so the Lipschitz constant $L_{S,\text{opt}}$ is not arbitrarily small.}

\end{document}